\newtheorem{theorem}{Theorem}
\title{Navigating Quantum Missteps in Agent-Based Modeling: A Schelling Model Case Study}
\author{
     C. Nico Barati \\
   Center for Secure and Intelligent Critical Systems \\
  Old Dominion University \\
  Norfolk, VA 23259 \\
   \texttt{cbaratin@odu.edu} \\
   \And
  Arie Croitoru \\
  Department of Computational and Data Science \\
  George Mason University \\
 Fairfax, VA 22030 \\
   \texttt{acroitor@gmu.edu} \\
    \And
   Ross Gore \\
   Center for Secure and Intelligent Critical Systems\\
    Office of Enterprise Research and Innovation\\
  Old Dominion University \\
  Norfolk, VA 23259 \\
   \texttt{rgore@odu.edu} \\
   \And
 Michael Jarret \\
  Quantum Science and Engineering Center \\
  Center for Social Complexity \\
  Department of Mathematical Sciences \\
  Department of Computer Science \\
  George Mason University \\
  Fairfax, VA 22030 \\
  \texttt{mjarretb@gmu.edu} \\
   \And
 William G. Kennedy \\
  Center for Social Complexity \\
  Department of Computational and Data Science \\
  George Mason University\\
  Fairfax, VA 22030 \\
  \texttt{wkennedy@gmu.edu} \\
   \AND
   Andrew Maciejunes \\
    Office of Enterprise Research and Innovation\\
  Old Dominion University \\
  Norfolk, VA 23259 \\
   \texttt{amacieju@odu.edu} \\
     \And
 Maxim A Malikov \\
  Department of Communication\\
  University of California Davis\\
 Davis, CA 95616 \\
  \texttt{mmalikov@ucdavis.edu} \\
   \And
   Samuel S. Mendelson \\
   Quantum Science and Engineering Center \\
   Department of Mathematical Sciences \\
   Department of Computer Science \\
   George Mason University
   Fairfax, VA 22030 \\
   \texttt{smendels@gmu.edu} \\  
}
\begin{document}
\maketitle
\begin{abstract}
Quantum computing promises transformative advances, but remains constrained by recurring misconceptions and methodological pitfalls. This paper demonstrates a fundamental incompatibility between traditional agent-based modeling (ABM) implementations and quantum optimization frameworks like Quadratic Unconstrained Binary Optimization (QUBO). Using Schelling's segregation model as a case study, we show that the standard practice of directly translating ABM state observations into QUBO formulations not only fails to deliver quantum advantage, but actively undermines computational efficiency. The fundamental issue is architectural. Traditional ABM implementations entail observing the state of the system at each iteration, systematically destroying the quantum superposition required for computational advantage. Through analysis of Schelling's segregation dynamics on lollipop networks, we demonstrate how abandoning the QUBO reduction paradigm and instead reconceptualizing the research question, from "simulate agent dynamics iteratively until convergence" to "compute minimum of agent moves required for global satisfaction", enables a faster classical solution. This structural reconceptualization yields an algorithm that exploits network symmetries obscured in traditional ABM simulations and QUBO formulations. It establishes a new lower bound which quantum approaches must outperform to achieve advantage. Our work emphasizes that progress in quantum agent-based modeling does not require forcing classical ABM implementations into quantum frameworks. Instead, it should focus on clarifying when quantum advantage is structurally possible, developing best-in-class classical baselines through problem analysis, and fundamentally reformulating research questions rather than preserving classical iterative state change observation paradigms.
\end{abstract}


\keywords{Agent-based models \and Segregation models \and Quantum algorithms \and Model equilibrium \and Network topology \and Lollipop networks}

\section{Introduction}
Quantum computing has witnessed rapid growth over the past decade, accompanied by significant investment and widespread claims of impending technological revolution. However, this acceleration has also revealed a recurring pattern of conceptual and methodological missteps. Many studies: (1) proclaim quantum advantage without establishing clear classical baselines, (2) conflate heuristic and adiabatic models, or (3) apply quantum tools to problems whose underlying structure cannot meaningfully benefit from quantum effects. As a result, there is a hype problem in which the theoretical potential is conflated with practical demonstration and where terminological imprecision leads to misinterpretation of results \cite{quantum_myth_busters_2025,quantum_hype_reality_2025,dont_believe_hype_2025}. This misrepresentation, often unintentional, may destroy the credibility of quantum technologies as it becomes increasingly apparent that they cannot meet inaccurate and unfair expectations \cite{google_dwave_speedup_2016,quantum_winter_warning_2025,experts_weigh_microsoft_2025}.

These tendencies are visible in efforts to adapt agent-based modeling (ABM) to quantum computing \cite{cmu_qubo_2025,penalty_free_qaoa_2024}. ABMs, which simulate complex systems through the local interactions of autonomous agents, are richly expressive but computationally demanding. The promise of quantum acceleration has attracted substantial interest. However, researchers assume that their existing frameworks can be adapted to quantum computing simply by retooling them \cite{qubo_criticism_2024,qubo_constraints_matlab_2024,qubit_efficient_qubo_2023}. This overlooks the deeper structural differences between classical bottom-up modeling implementations and quantum paradigms. Resulting research performed in this manner reflects superficial adaptations that obscure true opportunities for quantum innovation. These attempts to \emph{quantize} ABMs typically yield slower, less interpretable, and potentially inaccurate implementations. In this paper, we analyze the missteps ABM researchers take when attempting to apply quantum algorithms in this manner. Then, we provide guidance for improving the quality of quantum algorithm research. 

To demonstrate how this guidance can be applied in practice, we present a case study of our attempts to develop an effective quantum algorithm related to Thomas Schelling's model of segregation \cite{schelling1971dynamic}. Schelling's model of residential segregation  remains one of the most well-known and influential ABMs. It demonstrates how even mild individual preferences can lead to highly segregated patterns. In the model, agents of two types, representing different demographic groups, occupy cells on a grid or a network, and relocate when their local neighborhood fails to meet their preferred proportion of similar agents. Figure~\ref{fig:schelling_start_and_end}A shows the initial state of the model. Figure~\ref{fig:schelling_start_and_end}B shows an example of the final segregated state in which the agents are satisfied. Despite its simplicity, Schelling's model continues to inspire new research questions and real-world applications \cite{clark2008understanding, silver2021venues, abella2022aging, gambetta2023mobility, gunaratne2023generating, xu2024homophily}. \begin{figure}[!ht]
\centering
\includegraphics[width=0.9\columnwidth]{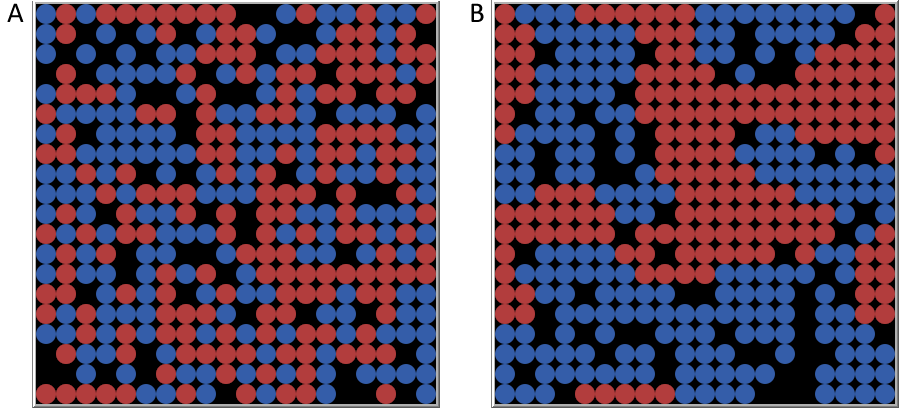}
\caption{(A) Initial state of Schelling's segregation model. (B) Final state of Schelling model with satisfied agents.}\label{fig:schelling_start_and_end}
\end{figure}


Here, we focus on developing an efficient algorithm to compute the minimum number of agent moves required to achieve global satisfaction across various network topologies in Schelling's model. This metric is important because it defines susceptibility to segregation for a model instance (e.g. network geometry, number of agents of each type, same type preference, etc). We began by constructing a Quadratic Unconstrained Binary Optimization (QUBO) formulation of Schelling's model to compute the number of agent moves required for global satisfaction. The effort fell victim to pitfalls. 
These included: (1) problem misalignment, (2) excessive encoding overhead, (3) loss of interpretability, and (4) the realization that there does not exist a quantum computer to run our solution with a non-trivial number of agents (e.g. 1,000 agents) \cite{preskill2018quantum}. 

We abandoned this approach and reformulated our research question. Instead of focusing on "simulating agent dynamics iteratively until convergence" we focused on any method to "compute the minimum number of agent moves for moves for global satisfaction". We identified a specific network topology, the lollipop network, whose structure can yield model instances with an extremely larger number of agent moves required to achieve global satisfaction.

By focusing on the structure of the network, we developed a very efficient classical algorithm to compute the required agent moves for global satisfaction. This solution sets a new lower complexity bound for this problem \cite{kreisel2022equilibria}.

Considering the overhead incurred by quantum state preparation and measurement, it is unlikely that any quantum approach will outperform this classical method with respect to wall-clock time in the foreseeable future. 

It is important to note that we did not directly address the question of whether an interesting quantum algorithm might exist for computing the required agent moves for global satisfaction. However, our  experience still illustrates how rigorous structural analysis and precise problem formulation, conducted prior to quantum algorithm design, can yield significant classical breakthroughs, while establishing concrete benchmarks for evaluating future claims of quantum advantage. Through this case study, we argue that genuine progress in improving the efficiency of ABM demands structural reconceptualization of research questions to favor formulations where quantum phenomena, or even existing classical algorithms, may offer real advantage.

\section{Quantum Missteps}
Most quantum missteps stem from a lack of clarity, consistency, or rigor in how key concepts and terminology are used. Quantum computing exploits strange properties of the universe that we know to be true, but nonetheless don't quite understand. In the words of Richard Feynman, 
\begin{quote}
    [Quantum theory] describes nature as absurd from the point of view of common sense. And yet it fully agrees with experiment. \cite{feynman1985qed}.
\end{quote}
A consequence of this observation is that applying classical intuitions and methodologies when developing quantum algorithms is not merely ineffective, it may be anti-productive.\footnote{We distinguish anti-productive from counterproductive: the former actively undermines progress, while the latter simply fails to advance it.} Indeed, quantum contextuality, a phenomenon that directly violates classical intuition, has been identified as a computational resource in certain quantum computing models \cite{Howard2014Contextuality}. Contextuality, alongside other foundational principles of quantum mechanics, underscores a critical departure from classical reasoning. In quantum computation, counterfactual outcomes—those that could have occurred but did not—carry as much, if not more, significance as observed outcomes. This feature of quantum theory fundamentally undermines the classical software development paradigm of iterative trial-and-error refinement. One cannot learn by observing the behavior of code that was never executed, yet quantum advantage may hinge precisely on such unexecuted computational paths. 

To exploit the power of quantum computers, one must think not only about how to map an existing problem and algorithm onto a quantum machine, but also whether it is even possible in the first place for such a mapping to achieve quantum advantage and through what phenomena. This requires an examination of the landscape of ongoing research, where it becomes clear that careful attention to these details is essential for both scientific progress and credible communication.

\subsection{The Current Landscape}

One of the most fundamental issues in current quantum computing research is a strong divergence in the research community. Now that quantum computers have entered and are beginning to emerge from the Noisy-Intermediate-Scale-Quantum regime \cite{preskill2018quantum}, many long-settled debates are being rehashed without the depth required to produce useful results. In Scott Aaronson's words
\begin{quote}
    In quantum computing [$\dots$] there’s right now a race for practice to catch up to where theory has been since the mid-1990s \cite{aaronson2023} \footnote{We have taken this quote out of context, but believe it still summarizes the current state of research. Scott's original intent was to note that experimental hardware cannot yet achieve what theory knows is possible. Nonetheless, we anticipate he would agree with our current use as another side of the same coin.}. 
\end{quote}

Comparing quantum computers to classical competitors requires substantially more than ``good'' results. Instead, it is necessary to determine that the machines are genuinely exploiting quantum theory in obtaining those results. By analogy, it is instructive to consider whether a complete replacement of a central processing unit (CPU) with a graphics processing unit (GPU) would be sensible. The clear answer is no, and by the same reasoning, one should not expect to simply replace a CPU with a quantum processing unit (QPU). Instead, as was the case during the emergence of modern GPUs, standard information processing tasks must be re-envisioned to take advantage of the new computational resource.

Various strategies have been proposed for applying quantum computers to existing research problems, yet none have demonstrated clear utility in addressing the practical challenges of significant scale or complexity. Many computational tasks remain better suited for CPUs than for GPUs, with the latter excelling in particular domains but not universally. Similarly, while numerous theoretical results predict quantum speedups for specific problems, it is unclear whether such advantages can be realized with near-term hardware. Nevertheless, the prospect of achieving demonstrable quantum advantage continues to motivate substantial research investment, with the potential for transformative impact across multiple domains \cite{xprize2025quantumapps}.


\subsection{Misuse of Foundational Terms}

Existing engineering efforts to exploit quantum technology primarily rely on variational and heuristic strategies, most of which are related to quantum annealing \cite{Dalzell_2025}:

\begin{itemize}
\item \textbf{AQC (Adiabatic Quantum Computing)} denotes the algorithmic model based on the adiabatic theorem, which requires sufficiently slow evolution to preserve the ground state throughout computation.
\item \textbf{QA (Quantum Annealing)} denotes the heuristic approach implemented on physical devices such as D-Wave systems, which may not satisfy strict adiabatic conditions.
\item \textbf{QAC} is an ambiguous acronym denoting "Quantum Adiabatic Computation" \cite{kieu2004hypercomputation, aharonov_adiabatic_2008}, "Quantitative Adiabatic Condition" \cite{li2014quantitative}, "Quantum Annealing Computing" \cite{Nasa2016QAC}, "Quantum-Computing Aided Composition" \cite{costa2022qac}, and potentially other terms.
\end{itemize}

The distinctions between these terms reflect fundamental differences in computational models and expected performance characteristics \cite{quantum_annealing_vs_gate_2024,adiabatic_cyber_2025}. When researchers conflate them, they may inadvertently make theoretical claims about adiabatic guarantees while actually implementing heuristic annealing approaches, leading to incorrect performance expectations and flawed experimental designs.

The historical record further compounds this confusion. Many papers incorrectly describe the relationship between quantum annealing and adiabatic quantum computing, often portraying quantum annealing as a ``specialized variant'' of AQC. This reverses the actual historical and theoretical relationship. Quantum annealing (QA) was introduced in the early 1990s by Finnila et al. \cite{finnila1994quantum} and again by Kadowaki and Nishimori \cite{kadowaki1998quantum} as a quantum analogue of simulated annealing, focusing on optimization through thermal and quantum fluctuations. Preceding each of these is a 1989 paper introducing the idea of QA without coining the term "Quantum Annealing" \cite{apolloni1989quantum}. While it is true that the adiabatic theorem, exploited by AQC was first proposed in 1928 by Max Born and Vladimir Fock \cite{born1928beweis}, this method of quantum computing was not adopted until much later. In 2000, Farhi et al. \cite{farhi2000quantum} proposed Adiabatic Quantum Computing (AQC) as a model for universal quantum computation through adiabatic evolution. Properly understood, AQC represents the ideal, closed-system, adiabatic case of the broader quantum annealing framework. When foundational terms are misused, subsequent research builds on faulty conceptual foundations, potentially wasting years of effort on approaches that cannot deliver their promised advantages.

\subsection{Claims of Computational Advantage}

Beyond terminological precision lies a more critical issue: \emph{the quantum-industrial complex}. We use this term to describe the tendency toward overstated claims about computational advantage, often driven by funding pressures and commercial incentives. A pervasive misstep involves suggesting that quantum approaches offer "advantages" without sufficient empirical evidence \cite{google_dwave_speedup_2016,dwave_quantum_advantage_2025}. This manifests itself through overstated algorithmic speedup claims, selective benchmark reporting, inadequate disclosure of experimental limitations, and commercial promotion of unverified capabilities. All of these undermine research credibility and impede sustained progress.

A fundamental misunderstanding within the quantum-industrial complex concerns the role of theory in algorithm development. Current quantum computers are analogous to early mainframes: just as punch cards required theoretical context to be useful, quantum algorithm development demands foundational, theory-based intuition. For instance, the proven polynomial equivalence between adiabatic quantum computing (AQC) and the circuit model remains underappreciated among engineers. Similarly, quantum state tomography, the classical reconstruction of a full quantum register, scales exponentially with system size. As a consequence, it is computationally prohibitive for large systems. These examples illustrate that theoretical foundations are not academic luxuries, but essential prerequisites for realizing quantum potential.

The quantum annealing community exemplifies these challenges in establishing fair benchmarks \cite{jiang2018quantum, maezawa2019toward, peng2019factoring}. D-Wave systems have demonstrated impressive speedups in controlled contexts, yet these advantages often vanish when compared against optimized classical algorithms rather than naive implementations \cite{google_dwave_speedup_2016}, or when runtime complexity is carefully analyzed \cite{aaronson2013quantum, ronnow2014defining}. Meaningful progress requires comparing quantum solutions against state-of-the-art classical counterparts to advance both fields simultaneously.

While quantum algorithm theory typically provides rigorous asymptotic results, concerns arise more frequently in applied and applications-oriented research. A significant oversight is the tendency to understate or omit practical obstacles that could invalidate reported theoretical advantages. Optimistic extrapolations from small-scale demonstrations to future performance often proceed without acknowledging substantial engineering barriers separating proof-of-concept from practical advantage. This optimism, while motivating, risks misleading funding agencies, collaborators, and the broader scientific community about current capabilities and near-term prospects. Such miscommunication fosters skepticism among stakeholders and jeopardizes support for future research.

\section{How Quantum Missteps Manifest}
The missteps outlined in the previous subsections share a common thread. They arise from researchers' tendencies to approach quantum computing problems by directly translating familiar algorithms and problem formulations without reconsidering the fundamental structure of the computational task. 

\subsection{Agent-Based Modeling and Quadratic Unconstrained Binary Optimization (QUBO)}
This pattern is evident in agent-based modeling, where the allure of quantum speedup has led many researchers to reduce complex social systems to standard optimization frameworks like QUBO without questioning whether such reductions preserve the meaningful structure of the original problem \cite{cmu_qubo_2025,penalty_free_qaoa_2024,qubo_criticism_2024,qubo_constraints_matlab_2024,qubit_efficient_qubo_2023}.

The QUBO reduction represents one of the most pervasive examples of this misguided approach. Many quantum computing practitioners have adopted the belief that any optimization problem must be transformed into a QUBO formulation to be suitable for quantum solving \cite{cmu_qubo_2025,qubit_efficient_qubo_2023,penalty_weights_qubo_2025}. Unfortunately, these systems can only handle unconstrained optimization problems, and there are no real-world problems without constraints. The standard approach of converting constrained problems to QUBO using penalty methods creates terrible landscapes where, for large problems, one may be lucky to find any feasible solution \cite{qubo_criticism_2024}.



\subsection{A Concrete Example of What Not To Do}
Consider the typical approach to quantum agent-based modeling. Researchers begin with a classical ABM, such as Schelling's model, then identify computational bottlenecks (like convergence time or equilibrium detection). Then, they transplant the same structure of the classical implementation into a quantum algorithm, assuming it will solve the underlying problem more quickly. This approach invariably leads to a QUBO reduction which encodes the research question in a traditional agent-based model implementation \cite{cmu_qubo_2025,penalty_free_qaoa_2024,qubo_criticism_2024}. 

We know this because we did it. Several authors of this paper initially tried to design a quantum algorithm to compute the number of agent moves required for global satisfaction in Schelling's model via QUBO reduction. In fact, due to the many links between Schelling's and Ising models in the literature, it is natural to attempt to apply QUBO to the Schelling model \cite{stauffer2008social, laciana2011ising, gauvin2024ising, stauffer2007ising}.

Our workflow began by simulating the traditional implementation of Schelling's segregation model, where agents of two types occupy positions on a grid and their satisfaction depends on neighboring agents. Each agent on the grid was assigned a variable representing its type, red or blue, which was encoded as a binary value: 0 for red and 1 for blue. Every grid position was mapped to a binary variable, so the entire grid was represented as a vector encoding the global state of all agents. The process then constructed a Hamiltonian energy function, assigning higher energy (penalizing) arrangements to agents with neighbors of the opposite type, and lower energy (rewarding) arrangements to agents surrounded by similar types. 

This Hamiltonian fully captured Schelling model's classical social preference rules in mathematical form. Additional terms were incorporated to enforce that only valid agent assignments appeared in final solutions, ensuring the QUBO returned physically meaningful and socially optimal arrangements. This energy function was transformed into a quantum objective, and the Quantum Approximate Optimization Algorithm (QAOA) was applied to find low-energy configurations. QAOA alternated quantum circuit layers representing the problem’s cost structure and mixing, with parameters optimized by a classical optimizer, leading to a quantum state where measurement yields grid arrangements that best reflect the Schelling model’s social preferences.  Our source code of this original attempt at a quantum approach is available here \cite{gore2025schellingqubo}.

This approach is fraught with issues. It destroys the structure that makes the Schelling problem interesting. It cannot be executed on any resources we have access to for more than 20 agents. Finally, as shown in Table \ref{tab:quantum_slowdown}, it does not provide speedup. Ignoring the time required to execute our Qiskit simulation, the problem encoding itself requires more time than executing the classical algorithm. This is an example of exactly what not to do when developing a quantum algorithm for an agent-based model. We discuss each of these issues in further depth in Section \ref{subsec:count-first}.






\begin{table}[htbp]
\centering
\caption{Performance Comparison: Classical Schelling vs. Quantum QUBO Implementation Problem Encoding}
\label{tab:quantum_slowdown}
\begin{tabular}{crrrrrr}
\toprule
Grid & Agents & Qubits & Classical & Quantum Encoding & Classical Speedup\\
Size &  & Required & Time (ms)  & Overhead (ms) & Over Problem Encoding  \\
\midrule
3$\times$3 & 7 & 18 & 1.41 & 64.15 & 45.82x \\
4$\times$4 & 12 & 32 & 2.46 & 1,631.81 & 663.34x\\
\bottomrule
\end{tabular}
\end{table}

\subsection{Understanding Quantum Advantage Through the Welded Tree Problem}

Our failure with using QUBO as a means to create a traditional ABM implementation reflects a fundamental misalignment between classical ABM approaches and quantum computational paradigms. In particular, most classical ABM approaches implement Markov processes. A time-homogeneous Markov chain on a finite space $\mathcal{S}$ is a family of random variables $X_0,X_1,X_2,\dots$ such that for all $t \in \mathbb{N}$ and $u \in \mathcal{S}$
\[
    \mathrm{Pr}\left(X_{t+1} = u \; \vert \; X_t, X_{t-1},\dots, X_0\right) = \mathrm{Pr}\left(X_{t}=u \;\vert\; X_t \right)
\]
for all $u \in \mathcal{S}$ and $t \in \mathbb{N}$. 

In other words, the state $X_{t}$ ``screens off'' information held in all states $X_{t' <t}$. While we can always refer to a long history, only the most recent state helps us understand the actual behavior of the system. In a Markov process, all behavior is determined by (1) the current state of the system and (2) the transition rules of the system. Formally, we can define a \textit{transition operator} $U:\mathcal{S} \rightarrow \mathcal{S}$ such that if $\pi_{t}(x)$ represents $\Pr{X_{t+1} = x}$, then $\pi_{t+1} = U \pi_t$. As we normally deal with finite systems, we can formalize this as a matrix. However, it need not be implemented as such for the mathematical machinery to still apply. This point is crucial to understanding how a quantum computer achieves computational advantage.

To clarify these concepts, we examine the welded tree problem \cite{childs2003exponential, childs_quantum_forgetting_2023}, a Markov process that can, in principle, arise within an agent-based modeling context. This problem is especially significant because it demonstrates a well-known exponential separation between classical and quantum computational performance. Notably, the welded tree is a specific instance of a more general class of problems we consider in the context of the Schelling model; thus, the existence of exponential quantum speedup is guaranteed for certain agent-based models.

The welded tree problem illustrates both the mechanisms underlying genuine quantum speedup and the reasons why standard agent-based modeling algorithms are typically unable to capitalize on these speedups. While exponential improvements do exist, as the welded tree shows, realizing them in ABMs requires an input-output perspective that we describe in later sections.

\subsubsection{The Welded Tree Problem Query Model}

In the welded tree problem, two complete binary trees of height $n$ are connected at their leaves through a random bijection, creating a single connected graph with two special vertices: \textproc{Entrance} (the root of the first tree) and \textproc{Exit} (the root of the second tree). The random connections at the leaves form a tangled middle region with potentially extensive cycles. Figure~\ref{fig:welded-tree-example} illustrates this structure.

\begin{figure}[!ht]
\centering
\includegraphics[width=0.8\columnwidth]{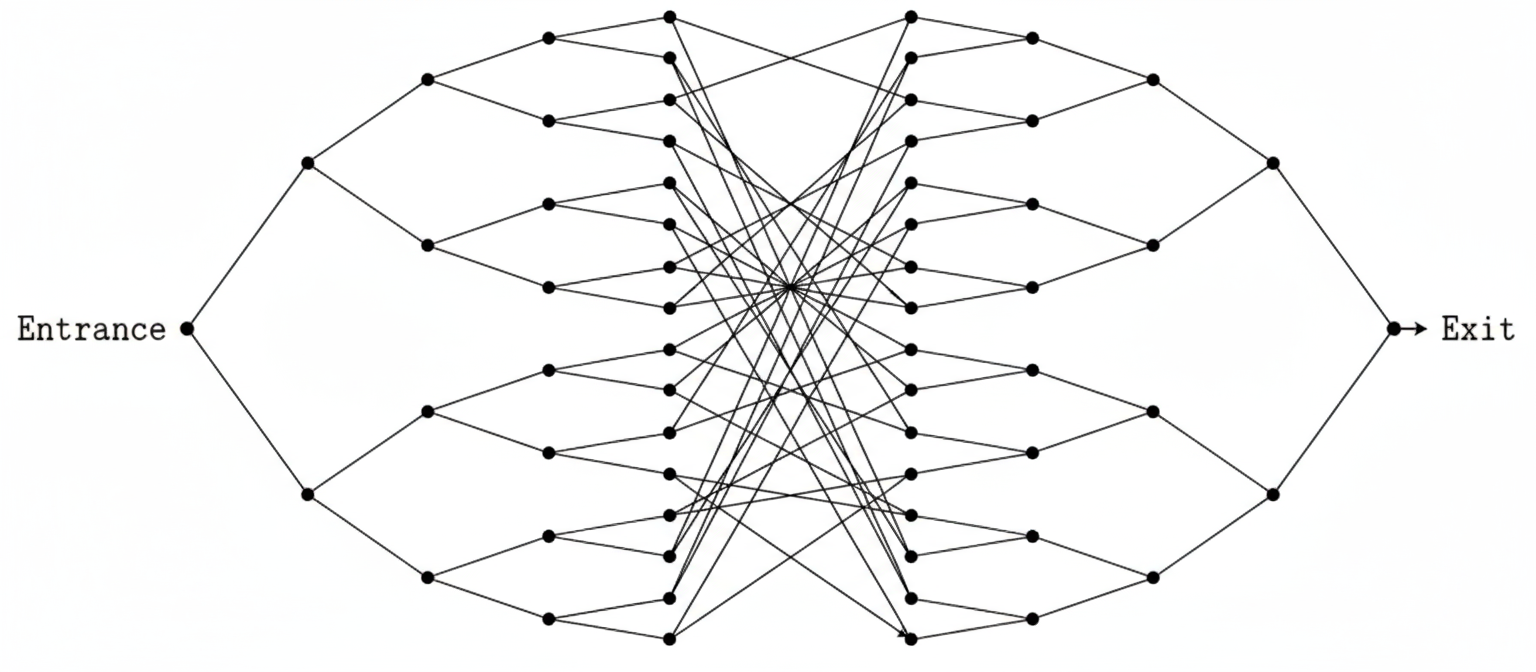}
\caption{An example of the welded tree problem. Two binary trees are welded together at their leaves through random connections, creating a complex middle region. Adapted from \cite{childs_quantum_forgetting_2023}.}
\label{fig:welded-tree-example}
\end{figure}

The computational challenge is specified by its query model( Algorithm \ref{problem:welded_trees}).

\begin{algorithm}[!ht]
\caption{Find the \textproc{Exit}}
\label{problem:welded_trees}
\begin{algorithmic}[1]
\State \textbf{Input:} \begin{enumerate}
    \item An oracle $O$ which, on input a vertex label $v \in V$, returns the list of adjacent vertices in adjacency list form:
\[
O(v) = \{ u \in V \mid (v, u) \in E \}.
\]
    \item An oracle $M$ such that $M(v \equiv \textproc{Exit}) = 1$ and $M(v \not\equiv \textproc{Exit}) = 0$.
\textproc{Exit}.
\end{enumerate}
\State \textbf{Output:} The label $v$ of the target vertex $\textproc{Exit}$. 
\State \textbf{Goal:} Given oracle access to $O$ and the label of \textproc{Entrance}, return the label of \textproc{Exit}.
\end{algorithmic}
\end{algorithm}

Crucially, the problem is \emph{not} simply to find \textproc{Exit}, but to find \textproc{Exit} \emph{within this specific query model}. This distinction is important and can be understood with a simple analogy. Imagine searching through a complex, 3-dimensional maze where you are only allowed to look around one room at a time\footnote{Any finite graph can be embedded in a 3-dimensional space, so this reflects a completely general finite structure.}, with no map or knowledge of the overall layout \textit{a priori}. You can ask for the exits from your current location (which other rooms you can directly reach), you can check whether a particular room is the final goal, and you can bring some stickers with you to mark rooms and doors as you see fit. However, you are not given any clues about the “right” direction to take, and rooms can rotate so that you are left with no sense of global direction.

This restriction is what is described in the query model. It is what makes the problem difficult for classical algorithms. Because binary trees branch towards their leaves, most randomly chosen moves bring you closer to the weld. Even knowing a weld exists doesn't help. This is because the randomness of the connection means that you cannot distinguish advantageous from disadvantageous moves and the weld admits long cycles, so marking your history just prevents you from backtracking along a long path. Effectively, each new room could be just another branch or part of a long, tangled loop. As a result, no classical algorithm can solve this problem efficiently. Instead, they fall back on a running time closer to that of exhaustive search and leading to an exponential number of queries with respect to the height of the trees.

\subsection{Markov Diagrams}
There are many algorithms one could use to solve this problem, some Markovian, some not. If you rely heavily on your stickers (in the analogy) then you might not implement a Markov process. Nonetheless, one might naturally get frustrated and think, ``well, if nothing really matters, let's just try a random walk.'' This would be the most basic Markov process and, in this case, the welded-tree itself would represent its Markov diagram. In particular, we can define the transition operator $U:\mathbb{R}^{V} \to \mathbb{R}^{V}$ such that
\[
\begin{array}{c@{\qquad}c@{\qquad}c}
Uv = \dfrac{1}{\deg v}\displaystyle\sum_{u\sim v} u
&
\text{or equivalently}
&
\Pr(X_{i+1}=u \mid X_i=v)=
\begin{cases}
\dfrac{1}{\deg v}, & u\sim v,\\
0, & \text{otherwise.}
\end{cases}
\end{array}
\]

If we were to directly implement this simple random walk as a Markov process, then the state space would be the space of vertices in Figure \ref{fig:welded-tree-example}. The non-zero ordered pairs $(u,v)$ of the above equation correspond with the edges of Figure \ref{fig:welded-tree-example}. Thus, up to the fact that the state \textproc{Exit} is absorbing and, hence, has uni-directional edges, Figure \ref{fig:welded-tree-example} is also the Markov diagram of the corresponding process. 

Although the Markov diagram and the original graph correspond, this need not always be the case. It is only necessarily the case when we are implementing a simple random walk on the graph as above. For ABMs, we can think at the level of the abstract Markov process (or at the level of the allowed state transitions). This allows complicated scenarios like welded trees to arise in less concrete settings. 

Consider the following example: a collection of $N = 2^n$ agents each maintain a binary state of their location. We represent the collective system state as an $n$-bit string $s \in \{0,1\}^n$, where each bit corresponds to one agent's state. The system begins at state $s = 0^n$ (the \textproc{Entrance}).

At each time step, the system evolves by selecting one agent uniformly at random and flipping their bit. This yields a simple random walk on the $n$-dimensional hypercube graph, where vertices are binary strings and edges connect states differing in exactly one bit. The process terminates when the system reaches state $s = 11\dots 1$ (the \textproc{Exit}). Figure \ref{fig:hypercube} elucidates the system as a hypercube. Importantly, within the hypercube, the hitting time for this walk is $\Omega(2^n)$.

\begin{figure}[!ht]
\centering
\includegraphics[width=0.5\columnwidth]{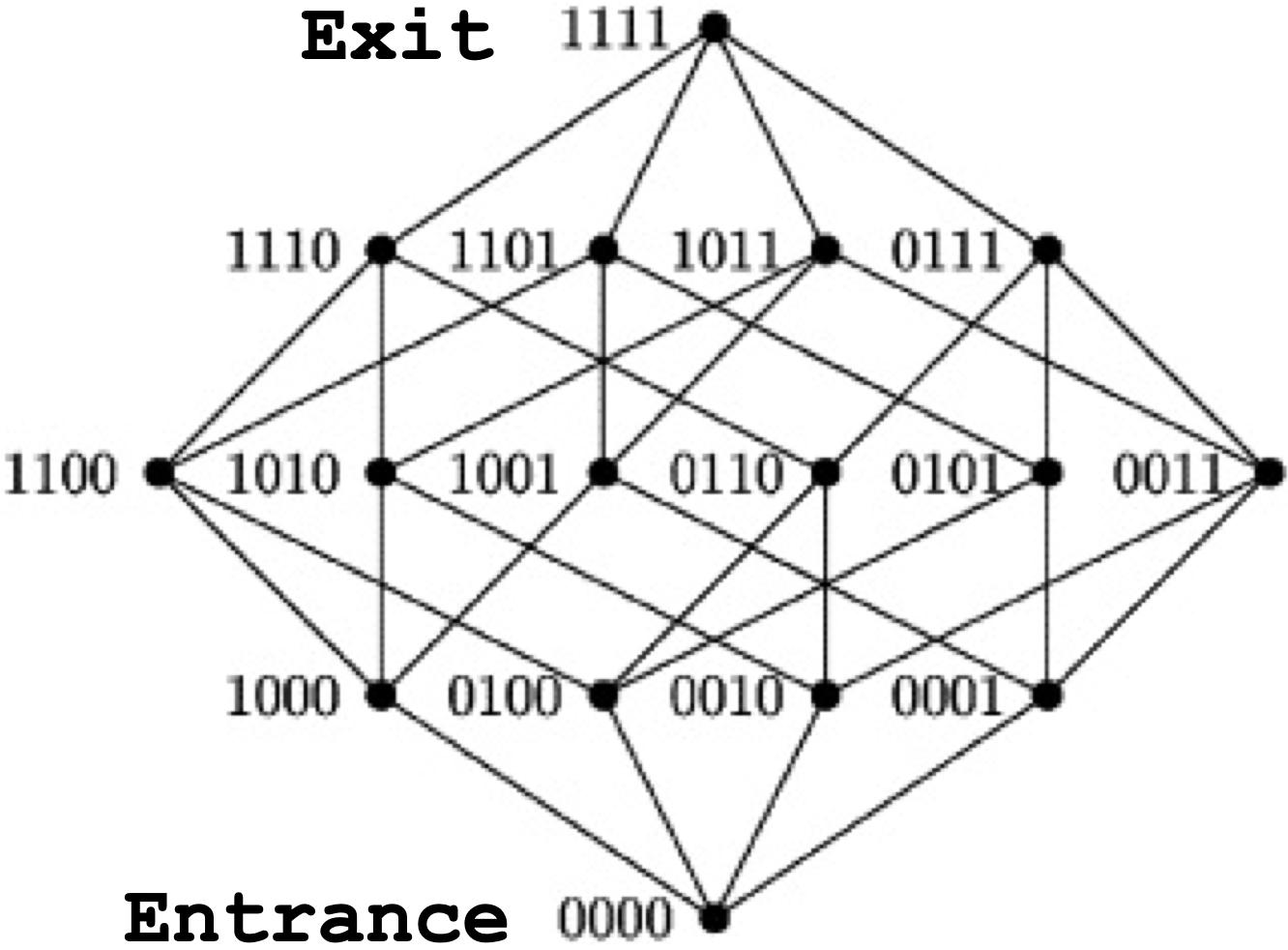}
\caption{$n$-dimensional hypercube graph, where vertices are binary strings and edges connect states differing in exactly one bit.}\label{fig:hypercube}
\end{figure}

As a concrete instantiation, imagine agents arranged in a physical space divided into two sides of a room like in the game dodgeball. Each agent independently and randomly decides whether to switch sides at each time step. Although the agents move in physical space and make individual decisions, the \textit{Markov process} describing the aggregate system state lives on the hypercube graph structure.

The \textit{Markov process} has states corresponding to the $n$-bit strings, with transition probabilities:
\[
\Pr(s_{t+1} = s' \mid s_t = s) = 
\begin{cases}
\frac{1}{n}, & \text{if } s' \text{ differs from } s \text{ in exactly one bit,}\\
0, & \text{otherwise.}
\end{cases}
\]

This defines a Markov chain on the hypercube with $2^n$ states. The Markov diagram is precisely the $n$-dimensional hypercube graph shown in Figure \ref{fig:hypercube}. Essentially, the Markov and agent-based processes occupy different, abstract spaces. Although agent behavior generally induces dependencies between these spaces, they are fundamentally different and related only by the underlying mathematics.


Classical and quantum algorithms can exploit the structure of the abstract Markov state space. The concern is \textit{what computational operations are permitted} and \textit{what information must be observed at each step}. This is especially important when thinking about traditional ABM implementations.

Consider the following two approaches to analyzing the dodgeball system to determine the number of steps until the system reaches the exit state.

\textbf{Approach 1: Traditional Agent-Based Model.} The standard agent-based modeling workflow maintains a complete assignment of agents to states. That is, for a set of agents $\mathcal{A}$ and a set of states $\mathcal{S}$ those agents can occupy, we keep a full specification of a function $f: \mathcal{A} \to \mathcal{S}$. A complete truth-table matrix for this function always requires at least as many bits as $\lvert{\mathcal{A}}\rvert\cdot\textproc{size\_of}(\mathcal{S})$. Until termination, we loop over the following steps:
\begin{enumerate}
    \item The complete system state $f[\mathcal{A}]$ is observed.
    \item An agent is selected and its state is updated $\mathcal{A}\rightarrow \mathcal{A'}$.
    \item The new complete system state $f[\mathcal{A'}]$ is observed.
\end{enumerate}

This approach requires $O(T)$ observations of the complete system state, where $T$ is the number of time steps until the exit state is reached. This workflow requires tracking \textit{which specific agent} has \textit{which specific bit value}. While the Markov state is simply the $n$-bit string (e.g., ``0101''), the traditional ABM must additionally maintain the agent-to-bit mapping: ``Agent 1 has bit 0, Agent 2 has bit 1, Agent 3 has bit 0, Agent 4 has bit 1,'' This information is necessary because the implementation selects a specific agent to update at each step and observe the result.


\textbf{Approach 2: Direct Computation Without Iterative Observation.} An alternative approach asks a different question: ``What is the expected number of steps $T$ until the system reaches the exit state?'' This question does not require simulating the system step-by-step with observation at each iteration. It admits solutions that compute the answer directly from structural properties of the hypercube and the transition probabilities.

A classical analyst might recognize that the structure of the problem is a random walk on a hypercube. They would apply known results from Markov chain theory to compute $T$ in closed form. A quantum algorithm might use a quantum walk to compute spectral properties of the transition operator without ever observing intermediate states. Both approaches work at the level of the abstract Markov state space, but neither requires the iterative observation that characterizes traditional ABM implementation. The difference between these two approaches leads to important consequences in applying quantum algorithms to ABMs.

A quantum algorithm based on a quantum walk can solve the welded tree problem  in polynomial time \cite{childs2003exponential}. The quantum walk explores the graph in superposition, effectively traversing all paths from \textproc{Entrance} to \textproc{Exit} simultaneously. Because both endpoints possess unique symmetry properties within the graph structure, quantum interference constructively reinforces their amplitudes, while destructively canceling paths that lead elsewhere. The walk effectively collapses the exponential structure into a simple path, as illustrated in Figure~\ref{fig:welded-tree-reduced}.

\begin{figure}[!ht]
\centering
\includegraphics[width=0.9\columnwidth]{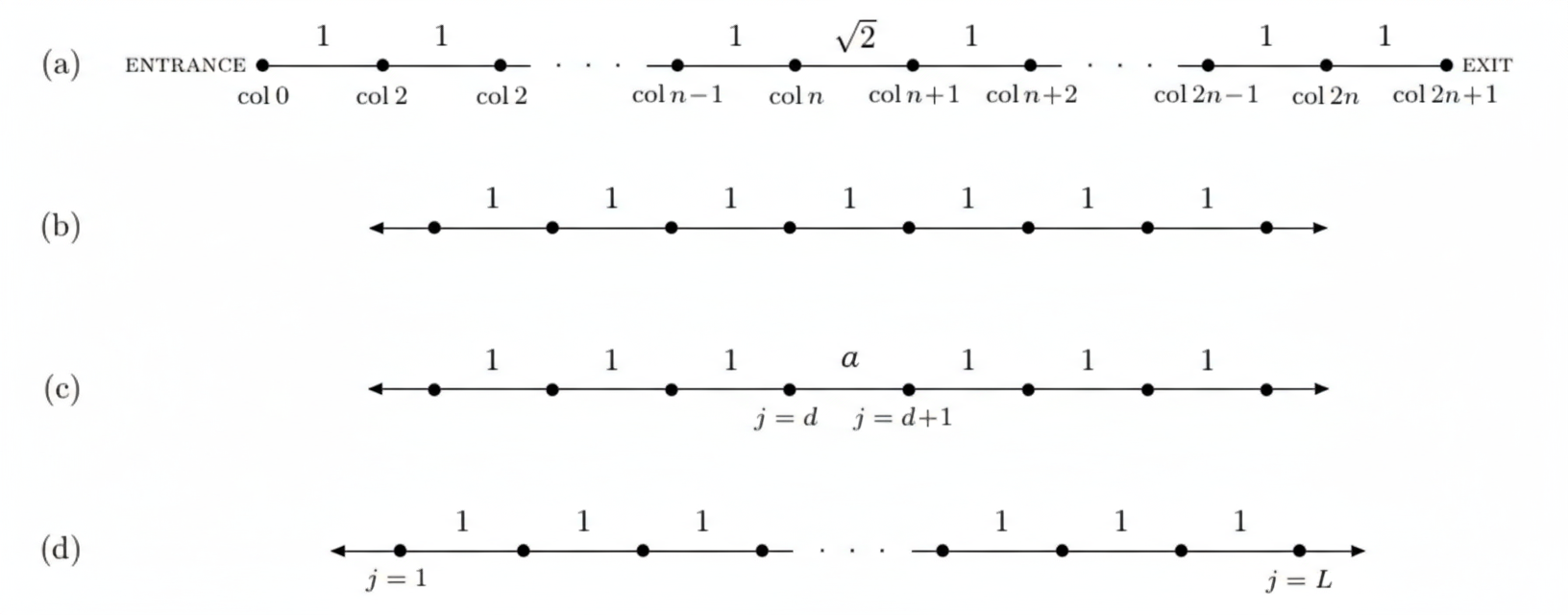}
\caption{Quantum walks effectively reduce the complex to a simple path with $2n+2$ vertices—one per level. The quantum superposition exploits symmetry to propagate efficiently without explicitly tracking individual paths. Adapted from \cite{childs2003exponential}.}
\label{fig:welded-tree-reduced}
\end{figure}

What makes the quantum algorithm powerful is not that it retains more information than traditional ABM implementation. Instead, a quantum walk succeeds by not retaining this information. If one measures the quantum state mid-computation to observe it, the superposition collapses and the quantum advantage vanishes. The quantum algorithm must traverse the entire graph \emph{without observing intermediate states} to preserve the interference patterns that enable polynomial-time solutions. This is critically important. The algorithm discards path information that a traditional ABM implementation would retain.

Knowing the structure of \Cref{fig:welded-tree-reduced} in advance offers a significant opportunity for algorithmic efficiency, as later sections will demonstrate. Unused structural information in a simulation is a missed chance to develop more efficient algorithms. Quantum algorithms are inherently sensitive to such structure and, as the welded tree example shows, often exploit it automatically. While this is a tremendous asset when the structure is unknown, failing to leverage known structure is a misstep—on both classical and quantum fronts.

\subsection{Implications for Agent-Based Modeling}

The previous examples reveal a fundamental incompatibility between quantum advantage and traditional ABM implementations. The power of quantum algorithms emerges from maintaining superposition across computational steps. This is precisely what ABM implementations systematically destroy through their iterative observation of intermediate states along the way to the solution.

To make this contrast explicit, consider the traditional ABM implementation in \Cref{alg:traditional_ABM}.
\begin{algorithm}[!ht]
\caption{Traditional Simulate-Agent-Based Model Algorithm \label{alg:traditional_ABM}}
\begin{algorithmic}
\Function{Simulate}{$s_i,\tau$}
\While{$\tau(s_i) = \textproc{False}$}
\State Do some user I/O with $s_i$
\State $s_{i+1} = Us_i$
\State $i\gets i+1$
\EndWhile
\State\Return
\EndFunction
\end{algorithmic}
\end{algorithm}

The user interaction at line 2 introduces intermediary state observation, collapsing the quantum superposition, and precluding advantage. To correct for this and align with the desired I/O abstraction, we adopt a more precise formulation based on a Markovian model of state propagation. In this framework \textit{getting the next state} is formalized as \Cref{alg:markov_ABM}.
\begin{algorithm}[!ht]
\caption{Get Next State \label{alg:markov_ABM}}
\label{prob:ABM}
\begin{algorithmic}
\Function{Get\_Next\_State}{$s_i \in \mathcal{S}, U$}
\State Take as input An initial state $s_i \in \mathcal{S}$ and an oracle $U : \mathcal{S} \to \mathcal{S}$ specifying state updates.
\State \Return The state $U s_i = s_{i+1}$.
\EndFunction
\end{algorithmic}
\end{algorithm}

Unless a single application of \(U\) is computationally costly, the runtime of \Cref{alg:traditional_ABM} is dominated by how many times \Cref{alg:markov_ABM} is invoked, not by the internal complexity of applying \(U\) itself. In effect, quantum behavior can only occur within \Cref{alg:traditional_ABM}. Therefore, the algorithm achieves no intrinsic speedup. It merely replaces each classical state update with an equivalent quantum operation, yielding the same number of calls. As a result, if the time-per-call is not the dominant behavior and the number of calls is, then any advantage one buys from a quantum computer is extremely limited. This is formalized by the following well-known no-go theorems:
\begin{theorem}[Output-size lower bound \cite{sipser1996introduction}]
\label{thm:output-size}
Any algorithm that must reveal (print, transmit, or otherwise measure) $T$ intermediate
states incurs $\Omega(T)$ time just to produce that output, independent of the internal
computational model.
\end{theorem}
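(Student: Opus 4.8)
The plan is to treat this as an information-theoretic counting argument rather than anything model-specific, since the theorem claims model independence. First I would fix a precise meaning for ``reveal'': the algorithm performs a sequence of output actions (writes to an output tape, messages on a channel, or measurement outcomes recorded to a classical register), and the claim is about the total number of elementary output steps. The key observation is that producing a state description of size $b$ bits requires at least $b$ primitive output operations, because each such operation commits at most $O(1)$ bits of information to the output medium; this is just the statement that a tape head writes one symbol per step, or that a channel transmits one symbol per use. So I would lower-bound the work by the cumulative size of the $T$ intermediate states that must appear on the output.

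The main steps, in order: (1) model the output as a monotone, append-only object (a tape, transcript, or trace) whose length is what we are counting; (2) note that the $T$ required intermediate states are each non-empty — at minimum one symbol, and more honestly $\Omega(\log|\mathcal{S}|)$ symbols to name a state — so the transcript has length $\Omega(T)$; (3) observe that the machine, whatever its internal model (classical, randomized, quantum with classical output), advances the output position by at most a constant per time step; (4) conclude that the running time is at least the transcript length, hence $\Omega(T)$. For the quantum case I would add one sentence: measurement outcomes that are recorded also obey this accounting, since each recorded bit is a classical output event, so superposition in the internal register does not circumvent the bound on the classical side channel. This is essentially the standard ``output size is a lower bound on time'' fact from complexity theory, which is why the citation to Sipser is appropriate, and I would frame the proof as an instantiation of that fact.

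The hard part will be getting the level of rigor right without overclaiming: the theorem as stated is almost a definitional truism, so the real work is making the hypotheses explicit enough that it is not vacuous and not false. In particular I must be careful that ``reveal'' genuinely means the states appear in the output — if an algorithm only needs to reveal the \emph{final} state, no such bound applies, and if ``intermediate states'' are revealed in some compressed aggregate form the count could be smaller; so I would state the theorem for the regime where each of the $T$ states is individually emitted. I would also flag that the bound is on \emph{output-producing} time specifically, which is exactly the point the surrounding text wants: any ABM implementation that observes (prints/transmits) the state at each of $T$ iterations pays $\Omega(T)$ no matter how cleverly — classically or quantumly — the transition operator $U$ is implemented internally. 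A clean one-paragraph proof suffices; no case analysis on the computational model is needed beyond the single remark that all of them write output one bounded chunk at a time.
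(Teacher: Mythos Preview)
Your proposal is a sound argument for this folklore fact, but there is nothing in the paper to compare it against: the paper does not prove Theorem~\ref{thm:output-size} at all. It is stated with a citation to Sipser and treated as a known result, with no \texttt{proof} environment following it; the text moves immediately to Theorem~\ref{thm:no-go}. So the comparison task is vacuous here.

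That said, your write-up is more than the paper requires. The authors explicitly flag (in the preamble to Section~\ref{subsec:count-first}) that their theorems ``are not up to the typical standards of mathematical rigor'' and favor accessibility; this particular theorem is invoked purely as a named obstruction, and the citation is doing all the work. Your careful unpacking of ``reveal,'' the append-only transcript model, and the quantum-measurement remark would be appropriate if the paper were aiming for a self-contained treatment, but in context it would be over-engineering. If you were revising the paper, the right move would be either to leave the theorem as a bare citation (as the authors did) or to add at most a one-line parenthetical: each revealed state contributes at least one symbol to the output, and writing $T$ symbols takes $\Omega(T)$ steps regardless of the internal model.
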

This yields the following statement precisely about quantum algorithms.
\begin{theorem}[Measurement budget and coherence \cite{klm}]
\label{thm:no-go}
If a workflow requires a measurement of the full system state after each of $T$ updates,
a quantum implementation cannot preserve coherence across those $T$ updates. Any
possible quantum speedup must therefore come from reducing the \emph{number of required
observations} or from reformulating the question to avoid them.
\end{theorem}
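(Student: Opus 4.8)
The plan is to derive the statement directly from the measurement (collapse) postulate and then feed the result into \Cref{thm:output-size}. First I would fix the formal setting: ``measurement of the full system state'' means a complete projective measurement in the computational basis $\{\ket{s}\}_{s\in\mathcal S}$ of the register encoding $\mathcal S$, while the round-$i$ update is realized by some unitary channel $\mathcal U_i$ (permitted to touch fresh ancillas and to include pre- and post-processing, but whose net action on $\mathcal S$ is $U$). A run of the workflow is then the alternation $\text{measure}\to\mathcal U_1\to\text{measure}\to\mathcal U_2\to\cdots\to\mathcal U_T\to\text{measure}$, and the two claims to establish are (a) no coherence survives across two consecutive updates, hence a fortiori not across all $T$, and (b) the round structure forces an $\Omega(T)$ lower bound that only the two listed escapes can avoid.

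For (a) I would argue in two short steps. \emph{One measurement kills coherence:} by the L\"uders rule, a pre-measurement register state $\rho$ becomes $\rho'=\sum_s \ketbra{s}{s}\,\rho\,\ketbra{s}{s}=\sum_s p_s\ketbra{s}{s}$, diagonal in the basis, so every off-diagonal $\langle s|\rho'|s'\rangle$ with $s\neq s'$ vanishes; if the register was entangled with an environment $E$, the joint post-measurement state is the classical--quantum state $\sum_s p_s\ketbra{s}{s}\otimes\rho_E^{(s)}$, so conditioned on the (now definite) classical record $s$ the register is the basis state $\ket{s}$. \emph{Induction over rounds:} round $1$ is fed the basis state $\ket{s_0}$; if round $i$ is fed a basis state $\ket{s_{i-1}}$ (within each conditioned branch), then $\mathcal U_i$ may build an arbitrary superposition but the immediately following measurement returns, by the previous step, a basis state $\ket{s_i}$. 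Hence at no point does the register hold a superposition whose branches disagree about any $s_j$, $j<i$ — precisely the path-superposition $\sum_\gamma \alpha_\gamma\ket{\gamma}$ over histories $\gamma=(s_0,\dots,s_T)$ branching after a shared prefix that interference in the welded-tree walk exploits. This is the mechanism already flagged around \Cref{fig:welded-tree-reduced}: measuring mid-computation collapses the superposition and the advantage vanishes.

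For (b): since each round consumes a classical string and, after its terminal measurement, emits a classical string, the whole workflow is an interactive classical process that reveals the $T$ intermediate states $s_1,\dots,s_T$. \Cref{thm:output-size} then applies model-independently, giving an $\Omega(T)$ time cost just to produce that output; a quantum implementation constrained to observe after every update inherits the same bound and admits no asymptotic speedup in the number of rounds. The only way around it is to reduce the number of required observations below $T$, or to reformulate the question so intermediate observations are never needed — exactly the ``Approach 1'' to ``Approach 2'' move above, where one instead asks for a closed-form hitting time or a spectral property of $U$.

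The main obstacle I anticipate is not the collapse bookkeeping of (a), which is immediate, but scoping (b) so that the no-go is tight. Two points need care. First, between consecutive measurements a quantum subroutine could in principle compute a single application of $U$ asymptotically faster than any classical routine, so the statement is a genuine no-go only under the hypothesis — explicit in the surrounding text — that one application of $U$ is not the dominant cost; the claim must therefore be phrased relative to the number of update-rounds, not raw wall-clock time. Second, one might try to smuggle advantage through a ``side register'' that is never measured; here I would argue that after each collapse the joint classical--quantum state factorizes the measured record from everything still coherent, so such a register decouples from the measured dynamics and cannot carry interference \emph{about the sequence of updates}, hence cannot beat the $\Omega(T)$ bound on that sequence. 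Making this decoupling argument airtight is where the real work lies.
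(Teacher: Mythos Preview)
The paper does not actually prove \Cref{thm:no-go}; it states it as a ``well-known no-go theorem'' with a citation and immediately moves on to discuss its consequences for ABM workflows. So there is no paper-side proof to compare your proposal against --- you are supplying an argument the authors chose to outsource.

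That said, your proposal is sound and tracks the informal reasoning the paper sketches in the surrounding prose. Part (a) is exactly the collapse-plus-induction argument the authors gesture at when they write that ``user interaction at line 2 introduces intermediary state observation, collapsing the quantum superposition,'' and your use of the L\"uders rule to diagonalize $\rho$ and then iterate is the standard way to make this precise. Part (b) correctly chains into \Cref{thm:output-size}, which is how the paper implicitly wants the two theorems to interact. Your two caveats --- that the no-go is about round count rather than wall-clock time when a single $U$-application is itself expensive, and that an unmeasured side register must be shown to decouple from the measured history --- are both genuine and both acknowledged in the paper's text (the footnote about agents solving hard sub-tasks, and the remark that ``quantum behavior can only occur within \Cref{alg:traditional_ABM}''). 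The side-register decoupling is the only place where your sketch is not yet a proof: the factorization $\sum_s p_s \ketbra{s}{s}\otimes \rho_E^{(s)}$ you wrote down is already the right object, and the remaining step is just to observe that any later interference on the side register conditions on a fixed classical $s$-record and therefore cannot recombine distinct histories $(s_0,\dots,s_i)$. That is a one-line consequence of the classical--quantum form, not a deep obstacle.
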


Theorem \ref{thm:no-go} establishes that if updating the system state of an ABM is computationally inexpensive, taking constant time per step ($O(1)$), then no quantum approach can provide any relevant speedup. This is because the repeated measurements needed to track the system’s state at each step collapse the quantum superposition, destroying the conditions required for quantum advantage. Even in cases where each state update is more computationally costly, taking time proportional to the full simulation length ($O(T)$), the maximum possible quantum speedup is limited by the simulation length. 

The root cause is architectural. ABMs are traditionally implemented to observe the full system state at every time step. This enables bottom-up rules to be applied to the agents based on their specific state at a given time step. This design principle is at the core of canonical agent-based modeling \cite{wilensky2015introduction, epstein1996growing, macal2010tutorial}.  However, it is precisely this repeated observation that precludes quantum advantage. The welded tree and dodgeball problems succeed because they ask a question that requires only the final answer (\textproc{Exit}) rather than the complete trajectory from (\textproc{Entrance}) to (\textproc{Exit}) \cite{childs_quantum_forgetting_2023}. In contrast, the majority of ABMs are built around trajectory observation as the means to calculate the output. As we articulate below, many questions do not require a complete trajectory to answer.

This does not imply that quantum approaches to agent-based modeling are impossible. Rather, it demands that researchers fundamentally re-conceptualize what questions to ask about agent systems and what forms of answers quantum (and even classical) algorithms can meaningfully provide. Aside from edge cases, quantum advantage will not emerge from simulating classical ABM dynamics faster, but from formulating entirely new questions about agent systems.\footnote{It is possible, for instance, that agents might be required to solve problems before making decisions. If one were to introduce this sub-task, then a quantum computer might be able to help the agent actually decide something previously undecidable and confer an advantage even in the traditional setting. Nonetheless, this feels contrived.} These questions do not require observing complete system states at intermediate time steps. Instead, they require reformulating their research question into a problem whose structure aligns with quantum computational paradigms. As we show in the next section, this reformulation alone often reveals new and better classical approaches that can be near-optimal for specific tasks.

\section{Using A Structural Reconceptualization Approach To Avoid Quantum Missteps}
Through discussions between the quantum- and ABM-trained co-authors, we recognized that our initial QUBO formulation for “running an agent-based Schelling's model using a quantum reduction to determine the number of moves required to achieve global satisfaction” introduced the issues we have previously described.

Fully appreciating the issues with our approach, our quantum-trained co-authors helped us rethink the problem’s structure and formulation. Rather than focusing on recasting the traditional Schelling's model into the QUBO framework, we began asking a more targeted question: “How many agent moves are required for global satisfaction in the Schelling model?” Although this question may seem similar at first glance, it is fundamentally distinct. Crucially, it admits a well-defined input/output formalization that aligns better with both theoretical analysis and quantum implementation

\begin{algorithm}[!ht]
\caption{Number of Agent Moves to Global Satisfaction in Schelling's Model}
\begin{algorithmic}[1]
\State \textbf{Input:}
\begin{itemize}
    \item A lattice or network of $N$ agents, each assigned a binary type $\sigma_i \in \{A, B\}$.
    \item A neighborhood function $\mathcal{N}(i)$ specifying which agents are considered neighbors of agent $i$.
    \item A tolerance parameter $\tau \in [0,1]$ defining local satisfaction:
    \[
      i \text{ is satisfied} \iff \frac{\#\{ j \in \mathcal{N}(i) : \sigma_j = \sigma_i \}}{|\mathcal{N}(i)|} \ge \tau.
    \]
    \item An update rule $U : \mathcal{S} \to \mathcal{S}$ that selects an unsatisfied agent and moves it to a vacant site (or swaps agents) to increase local satisfaction.
    \item An initial configuration $s_0 \in \mathcal{S}$.
\end{itemize}

\State \textbf{Output:}
An approximation of the expected integer $T$ such that, after $T$ updates, all agents are satisfied:
\[
\forall i,\; i \text{ is satisfied in } s_T.
\]
Equivalently, output the total number of agent moves required on average to reach global satisfaction.

\State \textbf{Goal:}
Determine $T$ given the initial configuration $s_0$ and the update rule $U$.
\end{algorithmic}
\end{algorithm}

These questions may seem closely linked. To some they may initially even appear identical. However, they are fundamentally different in their computational structure. The first question (running an iterative ABM simulation) requires observing the system state at every time step, making it subject to the limitations described by Theorem~\ref{thm:no-go}. The second question (computing the number of moves required) asks only for a single numerical output and does not mandate any particular solution method. While one approach is to explicitly simulate the model step-by-step until satisfaction is achieved, many alternatives exist. A mathematician might derive a closed-form expression based on the network topology and initial configuration. A computer scientist might recognize the problem as equivalent to a known graph-theoretic optimization and apply specialized algorithms. A physicist might identify conserved quantities or symmetries that constrain the answer. An empirical researcher might train a machine learning model on thousands of configurations to predict $T$ without simulation. The critical distinction is that this reformulated question does not require intermediate state observations. Only the final answer matters. This structural change opens the possibility of quantum approaches, avoiding the measurement-induced collapse that dooms traditional ABM implementations.

\subsection{Dimensions of Convergence for the Schelling Model}
A large body of research on Schelling's segregation model focuses on the impact of individual satisfaction thresholds, population density, and group proportions \cite{anastasi2025schelling, agarwal2021schelling}. Satisfaction thresholds define when agents are satisfied in their local environment and have received significant analytical and experimental study. For example, studies have systematically quantified how varying tolerance levels shape patterns of aggregation, with even minor shifts in the threshold leading to dramatic global transitions \cite{singh2009schelling}.

The influence of population density has also been rigorously analyzed, with research revealing how occupancy ratios affect local clustering and global segregation outcomes where agents are satisfied. Researchers have used extensive large-scale simulations to uncover scaling laws relating density to measures of aggregation and demonstrated that some aggregation effects observed in small systems do not generalize to larger, denser settings \cite{singh2009schelling,tsiatas2009population}. Group proportions, the relative sizes of subpopulations in the model, are another well-explored axis. Analytical and simulation results show that varying these proportions shifts both the stability and nature of segregated configurations. Several studies review how asymmetric group sizes and neighborhood composition affect final states and dynamics of segregation where agents are satisfied \cite{anastasi2025schelling}.

Despite this focus, the role of the underlying network structure, beyond the standard regular lattice or fully connected paradigms, remains less examined, particularly regarding its computational complexity and impact on model evolution. While some recent works have touched on network topology effects, including experiments with complex and dense networks, these typically treat structure as a background rather than a fundamental driver of computational complexity \cite{domic2011dynamics}. Some researchers have compared various network structures and noted that topologies, like scale-free networks, can alter convergence and equilibrium, but comprehensive analytical work is lacking \cite{banos2012network}. Recent research is beginning to interrogate the relationship between network topology and the computational demands on phase spaces navigated by the evolving model, pointing to an open need for more systematic exploration and theory-building in this direction \cite{anastasi2025schelling}.

\subsection{Lollipop Networks}

A lollipop network $L_n^m$ consists of two distinct components: a complete graph (clique) of $m$ vertices connected to a path of $n-m$ vertices \cite{lollipop_graph_wolfram,lollipop_graph_wiki}. This seemingly simple structure creates a topology with extreme computational properties that make it an ideal test case for understanding the limits of both classical and quantum approaches to agent-based modeling. Its properties also make it a good case study for other similar, but more realistic social networks, such as the caveman graph~\cite{watts1999networks}. What makes it so compelling is that it exhibits both properties of \textit{density} and \textit{sparsity} in the same graph.

In the context of the Schelling's segregation model, lollipop networks create a particularly challenging computational landscape for determining agent satisfaction dynamics. An example of satisfied agents on a $L_3^6$ lollipop network is shown in Figure \ref{fig:schelling_lollipop_6_3}. The clique portion of the network creates a region where agents have many potential neighbors and can quickly find satisfactory local configurations. However, the path portion severely constrains agent movement options. It creates a bottleneck that can dramatically slow convergence to global satisfaction. Agents positioned on the path have limited neighborhood options and may need to test every location in the linear structure to find an alternative satisfactory positions when they are dissatisfied.  This creates a scenario where determining the number of moves required for global satisfaction becomes computationally intensive via a traditional ABM implementation \cite{worst_case_clique_analysis_2021}.

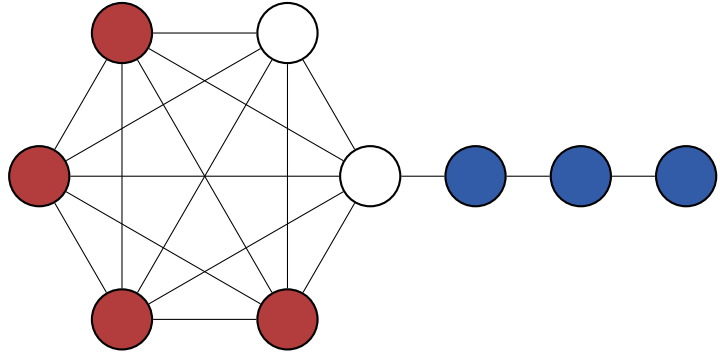
\begin{figure}[!ht]
\centering
\begin{tikzpicture}

\node[circle, draw=black, thick, minimum size=8mm, inner sep=0, fill={rgb,1:red,1;green,1;blue,1}] (c1) at (2.2, 0) {};
\node[circle, draw=black, thick, minimum size=8mm, inner sep=0, fill={rgb,1:red,1;green,1;blue,1}] (c2) at (1.1, 1.905) {};
\node[circle, draw=black, thick, minimum size=8mm, inner sep=0, fill={rgb,1:red,0.70;green,0.24;blue,0.24}] (c3) at (-1.1, 1.905) {};
\node[circle, draw=black, thick, minimum size=8mm, inner sep=0, fill={rgb,1:red,0.70;green,0.24;blue,0.24}] (c4) at (-2.2, 0) {};
\node[circle, draw=black, thick, minimum size=8mm, inner sep=0, fill={rgb,1:red,0.70;green,0.24;blue,0.24}] (c5) at (-1.1, -1.905) {};
\node[circle, draw=black, thick, minimum size=8mm, inner sep=0, fill={rgb,1:red,0.70;green,0.24;blue,0.24}] (c6) at (1.1, -1.905) {};

\draw (c1) -- (c2); \draw (c1) -- (c3); \draw (c1) -- (c4); \draw (c1) -- (c5); \draw (c1) -- (c6);
\draw (c2) -- (c3); \draw (c2) -- (c4); \draw (c2) -- (c5); \draw (c2) -- (c6);
\draw (c3) -- (c4); \draw (c3) -- (c5); \draw (c3) -- (c6);
\draw (c4) -- (c5); \draw (c4) -- (c6);
\draw (c5) -- (c6);

\node[circle, draw=black, thick, minimum size=8mm, inner sep=0, fill={rgb,1:red,0.2;green,0.36;blue,0.66}] (p1) at (3.6, 0) {};
\node[circle, draw=black, thick, minimum size=8mm, inner sep=0, fill={rgb,1:red,0.2;green,0.36;blue,0.66}] (p2) at (5.0, 0) {};
\node[circle, draw=black, thick, minimum size=8mm, inner sep=0, fill={rgb,1:red,0.2;green,0.36;blue,0.66}] (p3) at (6.4, 0) {};

\draw (c1) -- (p1);
\draw (p1) -- (p2);
\draw (p2) -- (p3);

\end{tikzpicture}
\caption{An example of a  $L_3^6$ lollipop network consisting of 4 red and 3 blue agents. White circles denote empty spaces in the network}\label{fig:schelling_lollipop_6_3}
\end{figure}

The pathological nature of these networks - along with the quantifiable ``how many steps'' question - makes them an ideal test bed for understanding the pitfalls of QUBO-like algorithms and the strengths of alternative classical approaches for computing the number of moves required for agent satisfaction in a segregation model. By focusing on the network topology, we developed deeper insights into the structure of the problem. This enabled us to find a new classical solution that is significantly more efficient than running the Schelling's agent-based model to compute the number of moves required for global satisfaction. 

This parallels (to a much less significant extent) the high-profile debate between IBM and Google over ``quantum supremacy.'' Both companies are currently trying to produce quantum devices capable of achieving quantum advantage, or the ability to outperform any conceivable classical computation for a specific computational task. In 2019, Google produced results that demonstrated that they had indeed achieved quantum advantage, albeit on a somewhat contrived and impractical task \cite{arute2019quantum}. In response, IBM showed that careful re-examination and novel classical algorithmic techniques could push back the apparent quantum advantage to a later date \cite{pednault2019leveraging}. Similarly, our work demonstrates that rethinking the formulation of the problem, rather than doggedly pursuing quantum acceleration via reducing a traditional ABM implementation, can yield efficient classical solutions that challenge preconceptions about where quantum speedup is genuinely attainable. Additionally, understanding of these nuances is necessary for designing efficient quantum algorithms for ABM in the future.

Our algorithm provides a concrete performance benchmark that any future quantum approach must surpass on this task. In the next subsection, we describe our new classical approach and compare it to using the traditional Schelling's ABM implementation to compute the number of moves needed for agent satisfaction in a lollipop network.

\subsection{A Count-First Algorithm for Determining Moves to Global Satisfaction in the Schelling Model on Lollipop Networks}
\label{subsec:count-first}

The count-first algorithm exploits structural symmetries in lollipop networks. It divides the network into its three components: (1) the clique, (2) the path, and (3) the bridge (their intersection). One should note that the theorems and proofs contained in the following sections are not up to the typical standards of mathematical rigor, but we have favored accessibility in exchange for rigor and assume a mathematically-minded reader would have no problem adapting them.

\subsubsection{Clique Satisfaction via Global Counts}
\label{subsec:clique}

Before discussing the lollipop itself, it is worth discussing the behavior of a Schelling's model process on the clique, because it reveals the underlying utility of exploiting structure. Assume that the clique is initialized with $\abs{V}$ vertices, $a$ agents of type $A$ and $b$ agents of type $B$. Note that for any agent of type $x \in \{a,b\}$, we have that its local satisfaction is given by
\(
    \textproc{SQ}(x) = \frac{x}{b+a}.
\)
Crucially, this quotient is \textit{entirely} independent of $\abs{V}$. Additionally, we also know (based on this symmetry) that the expected number of steps one can take is precisely $0$ or $\infty$. Thus, in this case, we are only required to decide whether the Schelling process is initially satisfied or never satisfied, which is a traditional decision problem.

We make the following obvious claim:
\begin{theorem}\label{thm:trivial-clique}
    In a clique with $a$ agents of type $A$ and $b$ agents of type $B$, every $A$-agent's other-type fraction is $b/(a+b-1)$ (and vice versa for $B$). Permutations within the clique cannot change this ratio.
\end{theorem}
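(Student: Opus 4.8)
The plan is to observe that this is really a counting statement with no dynamics involved, so the proof is a direct computation plus an invariance remark. First I would fix an arbitrary $A$-agent sitting in the clique. Because the clique is complete, its neighborhood $\mathcal{N}(i)$ consists of \emph{every} other vertex of the clique that is occupied by an agent; by hypothesis all $|V|$ vertices carry agents, so $|\mathcal{N}(i)| = a + b - 1$. Among those $a+b-1$ neighbors, exactly $a-1$ are of type $A$ (all the $A$-agents except $i$ itself) and exactly $b$ are of type $B$. Hence the same-type fraction is $(a-1)/(a+b-1)$ and the other-type fraction is $b/(a+b-1)$, which is the claimed value; the computation for a $B$-agent is identical with the roles of $a$ and $b$ swapped.

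Next I would address the invariance claim. A permutation of agents within the clique is a relabeling of which vertex holds which agent, but it does not change the multiset of types present, so $a$ and $b$ are unchanged, and every vertex still has all other vertices as neighbors. Therefore the counts $a-1$, $b$ (resp.\ $b-1$, $a$) that enter the fraction are permutation-invariant, and so is the ratio. I would also note the immediate consequence already flagged in the surrounding text: since the ratio does not depend on the configuration within the clique, either every agent is satisfied in $s_0$ (when $b/(a+b-1) \le 1-\tau$, equivalently $(a-1)/(a+b-1)\ge\tau$ for $A$-agents, and the analogous inequality for $B$-agents) or no sequence of intra-clique moves can ever make a dissatisfied agent satisfied, so the expected time to global satisfaction is $0$ or $\infty$.

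I would keep the write-up short and in keeping with the paper's stated preference for accessibility over formal rigor — essentially the two short paragraphs above, perhaps with the key fractions displayed once in an \texttt{equation} environment. There is no real obstacle here: the only thing to be slightly careful about is the off-by-one (an agent is not its own neighbor, so the denominator is $a+b-1$ rather than $a+b$, which also reconciles the statement with the $\textproc{SQ}(x)=x/(a+b)$ expression written just above it, where that earlier expression tacitly counts the agent itself or uses a closed-neighborhood convention). Flagging that convention mismatch explicitly is the one place where a careless reader could get confused, so I would add a one-line parenthetical reconciling the two formulas.
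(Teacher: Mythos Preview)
Your proof is correct and follows the same idea as the paper's---that in a clique every agent is adjacent to every other, so the other-type fraction depends only on the global counts $a$ and $b$ and is therefore permutation invariant. The paper's own proof is even terser (a single sentence invoking permutation invariance of any adjacency-dependent function, without explicitly computing the fraction), so your more detailed version with the explicit count and the off-by-one reconciliation is, if anything, more complete.
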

\begin{proof}
    Although we claim this is obvious, one need only note that, in a clique, all agents are adjacent and repositioning an agent does not change this. Any function that is concerned only with adjacency is therefore permutation invariant.
\end{proof}

As a consequence, the following algorithm can simulate the number of steps to satisfaction on the clique in time $O(1)$. That is, after exploiting symmetry, the count-first approach reduces to \Cref{alg:sim_clique}.
\begin{algorithm}
\caption{\textproc{Simulate\_Clique}$(a,b,v, \tau)$\label{alg:sim_clique}}
    \begin{algorithmic}[1]
        \Function{Simulate\_Clique}{$a,b,V,\tau$}
        \If{$\max\left(a,b\right)\leq \tau \left(a + b -1\right)$} \Return $0$.
        \EndIf
        \State\Return $\infty$. 
        \EndFunction
    \end{algorithmic}
\end{algorithm}

Due to \Cref{thm:trivial-clique}, no computational work needs to be done to answer the question of whether the clique is satisfied. All one needs to do is simply ask whether the system is initially satisfied, which requires $O(1)$ elementary operations. Similarly, a count-first approach on the clique would call $\textproc{Simulate\_Clique}$ and learn immediately that no internal dynamics changing local satisfaction are possible. 

Now, consider the traditional ABM approach. 
\begin{algorithm}[!ht]
\caption{\textproc{Simulate\_Clique\_Traditional}$(a,b,v,\tau,\textproc{runtime})$ \label{alg:sim_clique_trad}}
    \begin{algorithmic}[1]
        \Function{Simulate}{$a,b,v,\tau,\textproc{runtime}$}
        \State Initialize a random array $X \in V^{a+b}$ such that $X_i \neq X_j$ for any pair $(i,j)$.  
        \For{$T\in 0,1,\dots,\textproc{runtime}-1$}            \If{$\textproc{Total\_Unhappy(X)} \texttt{==} 0$} 
                \State\Return $T$
            \EndIf
            \State $i \gets \textproc{Get\_Unhappy\_Index}(X)$
            \State Sample $v \sim \mathrm{Uniform}(V\setminus X)$
            \State $X_i \gets v$
        \EndFor
        \State\Return $\infty$
        \EndFunction
    \State %
        \Function{Is\_Unhappy}{$x \in V$}
        \State $\textproc{Neighbors}\gets 0$
        \For{$y \in N(x)$}                
            \State $\textproc{neighbors}\texttt{++}$
            \State $\textproc{different} \gets 0$
            \If {$\textproc{type}(x) \neq \textproc{type}(y)$}
                \State $\textproc{different}\texttt{++}$
            \EndIf
        \EndFor
        \State \Return $\textproc{different} > \tau \cdot \textproc{neighbors}$
        \EndFunction
    \State %
        \Function{Total\_Unhappy}{$X \in V^{a+b}$}
        \State $\textproc{total} \gets 0$
        \For{$x \in X$} 
            \State $\textproc{total}\texttt{+=} \textproc{Is\_Unhappy}(x)$
        \EndFor
        \State \Return \textproc{total}
        \EndFunction
        \State %
        \Function{Get\_Unhappy\_Index}{$X \in V^{a+b}$}
        \While{\textproc{\textbf{true}}}
            \State Sample $i \sim \mathrm{Uniform}(0,1,\dots,a+b-1)$
            \If{\textproc{Is\_Unhappy}($x_i$)} 
                \State \Return $i$
            \EndIf
        \EndWhile
    \EndFunction
    \end{algorithmic}
\end{algorithm}
Suppose that we label the total running time of $\textproc{Is\_Unhappy}$. Each time we call $\textproc{Is\_Unhappy}$ we make a total of $\abs{N(x)}$ calls to the relevant $\textproc{\textbf{for}}$ loop and, hence, the running time of $\textproc{Is\_Unhappy}$ is $\Theta(\abs{N(x)})$. Now, when computing $\textproc{Total\_Unhappy}$ we require $a+b$ calls to $\textproc{Is\_Unhappy}$, such that its total running time becomes $\Theta\left((a+b)\abs{N(x)}\right)$. Since the $\textbf{\textproc{for}}$ loop in $\textproc{Simulate}$ is called $T+1$ times, we find that the total running time is $\Theta\left((a+b)\abs{N(x)}(T+1)\right)$. 

Now, if we consider explicitly the case of the clique, we have that $N(x) = a+b-1$ always and, if initially the system \textit{is satisfied}, $T = 0$. Hence, the total running time is guaranteed to be $\Theta\left((a+b)^2 \right)$. That is, the traditional method is \textit{polynomially slower} than the count-first method which exploits symmetry. In the case that the system \textit{is not satisfied}, we are forced to execute $\textproc{runtime}$ iterations of the \textbf{\textproc{for}} loop and end up with an algorithm that scales with $\Theta\left( (a+b)^2 \textproc{runtime}\right)$. Note that, depending upon one's choice of $\textproc{runtime}$ this can be substantially worse than $\Theta\left( (a+b)^2\right)$. Additionally, if we do not initially exploit the fact that we only need to decide between $0$ and $\infty$, we are left attempting to empirically approximate the expected running time. A single infinite run produces an unbounded empirical running time, whereas observing a completion time of zero may simply reflect good fortune. Consequently, even in satisfiable cases (ignoring structural considerations), approximating the hitting time requires multiple repetitions.

\paragraph{A QUBO formulation:} Now, we note the impact that this has on QUBO. In particular, QUBO encodes optimization problems and we have, above, been asked to compute a decision problem. The traditional way to rewrite our decision problem as an optimization problem is to ask, for some choice of $T$, "is the expected number of steps prior to satisfaction less than or equal to $T$?" Let us assume that we do exploit the structural consideration that it either is/is not ever satisfied, such that we can choose $T=0$. One can see that an optimum of $T=0$ confirms satisfaction is achieved at that step. However, although we have a success criterion, we lack a direct encoding for a quantum computer. Further reformulation is needed.

The natural approach would be for a QUBO formulation to explicitly calculate how frustrated the existing system is using a Hamiltonian. The Hamiltonian is an operator $H:R\rightarrow R$ where $R$ is some register $R$. For the sake of this manuscript, we allow $R = (0,1,-1)^V$ to be a trinary string representing occupation by agents of various colors. Although we will not focus on the mechanics of how the Hamiltonian implements the cost function in this paper, for a graph $G=(V,E)$ we can define the cost function
\[
    R^\mathtt{T} H_{\textproc{cost}}R = \frac{1}{4}\sum_{v_0 \in V} \sum_{v_1 \in V} \left(R(v_1)^2 R(v_0)^2 - R(v_1)R(v_0)\right) = -\sum_{\{v_0,v_1\} \in E} R(v_0) R(v_1) \left(R(v_0)-R(v_1)\right)^2.
\]
Above, one should note that $H_{\textproc{cost}}$ has the impact of performing the map
\[
    [H_{\textproc{cost}} R](v_i) = \begin{cases}
    1 & \text{if $v_i$ is unhappy} \\
    0 & \text{otherwise.}
    \end{cases}
\]
Above, $H_{\text{cost}}$ is linear. This can be made more obvious by taking the binary mapping $0 \mapsto 00$, $1 \mapsto 01$, and $-1 \mapsto 11$. 

Thus, summing the values in the vector $H_{\textproc{cost}}$ yields the total unhappiness.\footnote{This is also equivalent to the expected value of the occupied subgraph Laplacian under vector $R$, which yields a nicer expression that is more obviously a linear operator. The additional elegance and clarity do not help our exposition here and the current form is more explicit for the uninitiated.} Importantly, implementing this cost function already introduces a pitfall. In particular, \textbf{naively implementing a clique on a register of size $\abs{V}$ requires the ability to simultaneously manage all pair-wise registers}! Although this is possible for very small cliques, as we will see shortly, we do not even have enough interacting qubits yet to create a system that would even come close to being able to implement this cost function at the scale achievable by a desktop computer. 

If this is not bad enough, then we also must consider the Hamiltonian that drives dynamics. This might often be missed, but these dynamics should, indeed, match the dynamics of the corresponding model. For the typical Schelling model, dynamics themselves are given by $H_{\mathrm{driv}}:R\rightarrow R$ and we can assume that, based on our discussion above, we are okay with any \textit{single agent re-assignment}. This is actually non-trivial and extremely important. If we naively allow transitions that are classically forbidden under the corresponding Markov model, we might not simulate the appropriate process. For instance, suppose that agents $A$ are happy no matter what, but agents $B$ need to be adjacent to at least one other $B$ to be satisfied. It is possible to initialize the system, but not the clique, such that the initial configuration determines whether or not satisfaction is even possible. A faithful reproduction of this behavior has to take into account the allowed transitions only.

Now, we know that we $r_{u} = H_{\text{cost}}r$ represents the unhappy agents and we could have similarly defined some $H_{\text{free}}$ such that $r_f = H_{\text{free}}r$ represents the free vertices. Since this computer is quantum and preparing $r_u$ is our only naive means of identifying whether an agent is unhappy, we need to control on $r_u$ (or its complement) before setting whether a vertex is free or unoccupied, another difficult operation with current technology. 

A natural complaint, at this point, would be that the quantum algorithm could also behave in a counts-first sort of way. Although one could, in-principle, encode a less naive version of the algorithm, we would inevitably arrive at the same place. That is, we have always had that
\begin{theorem}
    There does not exist a quantum algorithm capable of deciding whether a Schelling process on a clique will reach satisfaction faster than \cref{alg:sim_clique}.
\end{theorem}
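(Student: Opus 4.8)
The plan is to show that \Cref{alg:sim_clique} is already optimal, so the word ``faster'' leaves no room for any competitor --- quantum or classical. First I would invoke \Cref{thm:trivial-clique}: since no permutation of the agents within the clique alters any agent's other-type fraction, a Schelling process on the clique has no non-trivial dynamics whatsoever. Its expected hitting time is therefore exactly $0$ when the initial configuration is satisfied and $\infty$ otherwise, and by the fixed ratio $b/(a+b-1)$ it is satisfied precisely when $\max(a,b) \leq \tau(a+b-1)$. The computational task thus collapses to evaluating a single arithmetic predicate on the inputs $a$, $b$, and $\tau$ --- which is exactly what \Cref{alg:sim_clique} does, in one comparison, using $O(1)$ elementary operations independent of $\abs{V}$, $a$, and $b$.

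The second step is the (admittedly trivial) matching lower bound. Any algorithm, in any model of computation, must consult its inputs to return a correct answer: a procedure that ignores $\tau$, say, can be fooled by sliding $\tau$ across the satisfaction threshold while holding $a$ and $b$ fixed. Hence every correct algorithm runs in $\Omega(1)$ time. This is also exactly the content of \Cref{thm:output-size} applied with a single output value, and it dovetails with \Cref{thm:no-go}: there is no trajectory to observe here --- there is at most one transition, and generically none --- so there is nothing for a putative quantum speedup to eliminate or reformulate away. Since $\Omega(1)$ is a hard floor and \Cref{alg:sim_clique} already attains it, no algorithm, quantum included, can decide clique satisfaction strictly faster.

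The only point that needs any care --- and the closest thing to an obstacle --- is pinning down what ``faster'' is supposed to mean so that the statement is not vacuous. Measured asymptotically, the argument above is already complete: constant time cannot be improved upon. Measured as concrete wall-clock time, the conclusion is if anything stronger, since any quantum implementation pays fixed state-preparation and measurement overhead (cf.\ the discussion around \Cref{thm:no-go}) on top of the very same single comparison, and is therefore strictly slower, not faster. Either reading yields the theorem. I would close by stressing why no amount of cleverness rescues the quantum side: there is no path to traverse in superposition, no spectral gap to estimate, no interference pattern to arrange --- the answer is a deterministic function of three numbers. The clique is precisely the degenerate endpoint of the structural spectrum that the remainder of the paper contrasts against the genuinely hard path and bridge components.
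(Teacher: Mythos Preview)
Your proposal is correct and follows essentially the same route as the paper: the paper's proof is a single sentence invoking \Cref{thm:no-go} together with the fact that \Cref{alg:sim_clique} already runs in $O(1)$, and your argument is a more fleshed-out version of the same idea. Your explicit $\Omega(1)$ input-dependence argument and the discussion of what ``faster'' could mean are more detailed than what the paper provides, but they add justification rather than change the approach.
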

\begin{proof}
    This is just an immediate consequence of \Cref{thm:no-go} and the fact that \Cref{alg:sim_clique} decides this question in $O(1)$ elementary operations.
\end{proof}

\subsubsection{Path Satisfaction via Sequential Scanning}
\label{subsec:path}

The analysis of agents on the lollipop network's path segment also leverages structural properties, but in a distinct manner. Instead of repeatedly querying all agents, we use cached values and local updates to improve efficiency compared to the traditional Schelling approach. Traditionally, \(\textproc{Total\_Unhappy}\) requires querying every agent once per evaluation, which holds true for the count-first method as well. The key difference lies in the frequency of these queries.

For an \(n\)-vertex path encoded as the word \((A,B,0)^V\), a single \(\textproc{Total\_Unhappy}\) query inspects each occupied agent and its neighbors once, incurring \(\Omega(a+b)\) queries. While this cost is similar in both traditional and count-first approaches per query, the count-first algorithm reduces the total number of such queries required by employing \Cref{alg:sim_path}.



\begin{algorithm}[!ht]
\caption{\textproc{Simulate\_Path}$(a,b,v, \tau)$\label{alg:sim_path}}
    \begin{algorithmic}[1]
        \State \textbf{Input:} $a,b,V,\tau$
        \Function{Simulate}{$a,b,V,\tau$}
        \State Initialize a random array $X \in V^{a+b}$ such that $X_i \neq X_j$ for any pair $(i,j)$.
        \State $\textproc{Unhappy\_Count} \gets \textproc{Total\_Unhappy}(V)$
        \State $T \gets 0$
        \While{$\textproc{Unhappy\_Count} > 0$}
            \State $\textproc{Take\_Step}(X,V,\textproc{Unhappy\_Count})$
            \State $T\texttt{++}$
        \EndWhile
        \State \Return $T$.
        \EndFunction
        \State
        \Function{Take\_Step}{$X,V,\textproc{Unhappy\_Count}$}
            \State $i \gets \textproc{Get\_Unhappy\_Index(X)}$
            \State Sample $v \sim \mathrm{Uniform}(V\setminus X)$
            \State Calculate how many agents adjacent to $X_i$ and $v$ are unhappy and let the result be $\textproc{prior}$
            \State $X_i \gets v$
            \State Calculate how many agents adjacent to $X_i$ and $v$ are now unhappy and let the result be $\textproc{after}$
            \State $\textproc{Unhappy\_Count} \gets \textproc{after} - \textproc{prior}$
            \State \Return
        \EndFunction
    \end{algorithmic}
\end{algorithm}
Note that lines 9 and 11 each require precisely $6$ calls to $\textproc{Is\_Unhappy}$ and each call to it is $O(N(v)) = O(1)$. Hence, in this case, $\textproc{Is\_Unhappy}$ is itself an $O(1)$ operation. The benefit is that by only updating the $\textproc{Unhappy\_Count}$, instead of re-calling $\textproc{Total\_Unhappy}$ in line 12, we trade an $\Omega(a+b)$ operation for an $O(1)$ operation. As a consequence, if we assume that we have constant-time access to the adjacency list of each vertex, this algorithm completes in time $O(\abs{V} + T )$ instead of $O(\abs{V}\cdot T)$. That is, simply caching the happiness count trades multiplicative scaling for additive scaling. Whenever the expected value of $T$ scales with $\abs{V}$ ($T \sim \abs{V}^x$), which it will for all but extremely low densities of agents, we have an $O(V^{\max\{1,x\}})$ algorithm instead of an $\Omega(V^{x+1})$ algorithm. That is, the count-first approach always outperforms the standard approach by a factor of $V^{x+1-\max\{1,x\}} = V^{\min\{1,x\}}$.

\subsubsection{Formal Presentation of the Count-First Algorithm}\label{4:algorithm}
The count-first algorithm (\Cref{alg:cf_schel}) exploits structural symmetries in lollipop networks. It divides the network into its three components: (1) the clique, (2) the path, and (3) the bridge. Rather than checking each agent's neighborhood individually (as in traditional ABMs), the algorithm caches values as in \Cref{alg:sim_clique} and \Cref{alg:sim_path}. The algorithm proceeds as follows, where we ignore the nuances of bridge behavior which requires only small updates to bridge behavior when the bridge itself is called. \footnote{The interested reader can find one way to handle these nuances in the specific implementation in \cite{Jarret_Schelling_High-Performance_2025}. We do not anticipate that ignoring this behavior would meaningfully impact the expected value of $T$, due to the all-to-all movement scheme.}

\begin{algorithm}[!ht]
\caption{Count-First Schelling's Model Satisfaction on Lollipop Networks \label{alg:cf_schel}}
\begin{algorithmic}[1]
\State \textbf{Input:} Lollipop graph $G$ with clique size $CS$, path length $PL$
\State \textbf{Input:} Number of agents $n$, satisfaction threshold $\tau = p/q$
\State \textbf{Output:} Number of moves required to reach global satisfaction

\Function{CountFirstSchelling}{$a,b,n,m,\tau$}
    \State Let $V$ be the vertices of a path graph of length $m$ \Comment{$n$ is implicitly the clique size}
    \State $\textproc{Place\_Agents}(a,b,n,V)$
    \State Calculate initial unhappiness of the path and let the result be $\textproc{Path\_Unhappy}$
    \While{$\textproc{Total\_Unhappy} > 0$}
        \State Choose a random number $r \sim \mathrm{Bernoulli}\left(\frac{\textproc{Path\_Unhappy}}{\textproc{Total\_Unhappy}}\right)$.
        \If{$r$}
            \If{$p < n - C_a -C_b$}
                \State \textproc{\textbf{continue}}
            \EndIf
            \State Draw $X\gets A$ with probability $\frac{C_A \cdot Is\_X\_Unhappy(A)}{\textproc{Clique\_Unhappy}(C_A,C_B)}$ and $X\gets B$ otherwise.
            \State $C_X \texttt{--}$
            \State \textproc{Add\_To\_Path}($X$)
        \Else
            \State Draw $X \sim \mathrm{Bernoulli}\left(\frac{n-C_A-C_B}{n+m-a-b}\right)$
            \If{X}
                \State $\textproc{Simulate\_Path}(X,V,\textproc{Path\_Unhappy})$
            \Else
                \State $i \gets \textproc{Get\_Unhappy\_Index}(V)$
                \State $t \gets \textproc{type}(V_i)$
                \State $C_t\texttt{++}$
                \State $V_i \gets 0$
            \EndIf
        \EndIf
        \State $T\texttt{++}$
    \EndWhile
    \State \Return $T$
\EndFunction
\State
\Function{\textproc{Clique\_Unhappy}}{$C_A, C_B$}
    \State \Return $C_A \cdot \textproc{Is\_X\_Unhappy}(A,C_A,C_B) + C_B\cdot \textproc{Is\_X\_Unhappy}(B,C_A,C_B)$
\EndFunction
\State
\Function{\textproc{Is\_X\_Unhappy}}{$X \in \{A,B\},C_A,C_B$}
    \State\Return $C_X < \tau (C_A + C_B)$
\EndFunction
\State
\Function{\textproc{Total\_Unhappy}}{$C_A,C_B,\textproc{Path\_Unhappy}$}
    \State \Return $\textproc{Clique\_Unhappy}(C_A,C_B) + \textproc{Path\_Unhappy}$
\EndFunction
\end{algorithmic}
\end{algorithm}

\subsubsection{Computational Complexity of the Count-First Algorithm}\label{4:complexity}
We analyze the computational complexity of the count-first algorithm by looking at the per-iteration cost of evaluating satisfaction and selecting moves required to reach global satisfaction. 

In a traditional ABM on a lollipop graph, each iteration requires checking the satisfaction of every agent by examining its neighborhood. Note that, unlike the pure path simulation, clique behavior demands that each member of the clique has $k$ occupants. Then, each recalculation of unhappiness requires $k(k-1)$ operations for the clique and an additional $a+b-k$ operations for the path. Thus, the total number of operations for the traditional approach to calculate $\textproc{Total\_Unhappy}$ is $k(k-1)(a+b-k)$. Note that if $(a+b) \geq m + \epsilon \abs{V}$ the pigeonhole principle guarantees that at least $k \geq \epsilon \abs{V}$ agents occupy the clique at any given time. If we fix any $\epsilon > 0$ independent of problem size, we find that combining these expressions,
\begin{align*}
    k(k-1)(a+b-k) &\geq k(k-1)(a+b-m) 
    \\&\geq k(k-1)(\epsilon\abs{V})
    \\&\geq (\epsilon\abs{V})^3-(\epsilon\abs{V})^2
    \\&= \epsilon^3\abs{V}^3\left(1 - \frac{1}{\epsilon\abs{V}} \right).
\end{align*}
For large enough $\abs{V}$, this is clearly $\Omega(\epsilon\abs{V}^3)$.

In contrast, we have already seen that internal clique dynamics are $O(1)$ and, in fact, the count-first algorithm skips them altogether in Line 13. Thus, we only have path-to-path, path-to-clique, and clique-to-path dynamics. We have already seen that path-to-path dynamics and path-to-clique dynamics are at worst $O(\abs{V})$. Thus, we have the following Theorem.

\begin{theorem}\label{thm:count-first-query}
    Count-First Schelling Satisfaction completes with at most $O(T)$ calls to $\textproc{Get\_Unhappy\_Index}$.
\end{theorem}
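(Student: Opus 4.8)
The plan is a simple charging argument: show that every invocation of $\textproc{Get\_Unhappy\_Index}$ can be attributed to a distinct agent move, of which there are exactly $T$, with only $O(1)$ overhead per move. Nothing deep is needed; the work is in reading \Cref{alg:cf_schel} carefully enough to see that the call sites are sparse and that the ``wasted'' iterations are free.

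First I would enumerate, by inspection of \Cref{alg:cf_schel} together with the subroutines \Cref{alg:sim_path} and \Cref{alg:sim_clique}, every place where $\textproc{Get\_Unhappy\_Index}$ is actually called inside one pass of the main \textbf{while} loop. There are only two: (i) the call issued inside $\textproc{Simulate\_Path}$, which routes through $\textproc{Take\_Step}$ and produces exactly one $\textproc{Get\_Unhappy\_Index}$ per path step; and (ii) the line $i \gets \textproc{Get\_Unhappy\_Index}(V)$ in the \textbf{else} branch that evicts a path agent back into the clique. The clique-internal dynamics are skipped outright (the \textbf{continue} at Line~13, justified by \Cref{thm:trivial-clique}), so they contribute nothing; the clique$\to$path transition only decrements $C_X$ and calls $\textproc{Add\_To\_Path}$, which updates the cached $\textproc{Path\_Unhappy}$ using $\textproc{Is\_Unhappy}$ and never $\textproc{Get\_Unhappy\_Index}$; and the count-bookkeeping helpers $\textproc{Total\_Unhappy}$, $\textproc{Clique\_Unhappy}$, $\textproc{Is\_X\_Unhappy}$ issue no calls at all. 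Hence a single loop iteration issues at most a constant number $c$ of $\textproc{Get\_Unhappy\_Index}$ calls (in fact at most one), and, crucially, an iteration that hits the \textbf{continue} branch issues \emph{zero}, since the \textbf{continue} precedes both call sites.

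Second I would separate the loop iterations into the ones that advance the simulation and the ones that do not. Every iteration that reaches $T\texttt{++}$ performs exactly one agent move --- clique$\to$path, path$\to$clique, or (through $\textproc{Simulate\_Path}$) path$\to$path --- and the loop exits precisely when $\textproc{Total\_Unhappy}=0$, i.e.\ at global satisfaction, so by the definition of $T$ in the problem statement the number of such $T$-incrementing iterations equals the final value of $T$. The remaining iterations are exactly the \textbf{continue} iterations, which by the previous paragraph cost nothing toward the quantity we are bounding. Summing over all iterations, the total number of $\textproc{Get\_Unhappy\_Index}$ calls is at most $c\cdot T + 0 = O(T)$, which is the claim. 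I would also note the one convention needed to make this airtight: $\textproc{Simulate\_Path}$ in \Cref{alg:sim_path} contains its own inner loop, so one must read the count-first invocation $\textproc{Simulate\_Path}(X,V,\textproc{Path\_Unhappy})$ either as a single $\textproc{Take\_Step}$ (the reading consistent with the $\Omega(a+b)\to O(1)$ caching discussion) --- in which case the charge above applies verbatim --- or as running that loop to completion, in which case each of its $\textproc{Take\_Step}$ calls is itself a path move and is counted in $T$, so the per-$\textproc{Take\_Step}$ charge of one $\textproc{Get\_Unhappy\_Index}$ still yields $O(T)$. Either accounting gives the theorem, and, as with the bridge nuances, the choice does not affect the asymptotics (cf.\ \cite{Jarret_Schelling_High-Performance_2025}).

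The main obstacle is not the counting, which is mechanical, but pinning down the under-specified pieces tightly enough that ``each productive iteration $=$ one move with $O(1)$ oracle calls'' is literally true: the $\textproc{Simulate\_Path}$ signature mismatch just discussed, the unspecified helpers $\textproc{Place\_Agents}$ and $\textproc{Add\_To\_Path}$, and the guard $p < n - C_a - C_b$ controlling the \textbf{continue}. One might worry that a spinning sequence of \textbf{continue} iterations never makes progress; a full termination argument is genuinely needed for the algorithm's correctness, but it is orthogonal to \emph{this} theorem, since those iterations provably issue no $\textproc{Get\_Unhappy\_Index}$ calls. I would therefore state explicitly that the bound counts only oracle calls, note that \textbf{continue} iterations are free, and defer termination (or cite the reference implementation), exactly in keeping with the paper's stated trade of rigor for accessibility.
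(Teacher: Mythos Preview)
Your proposal is correct and follows essentially the same approach the paper takes: the paper's argument (the short paragraph immediately preceding the theorem) is simply that internal clique dynamics are skipped by the \textbf{continue} and the remaining path-to-path, path-to-clique, and clique-to-path branches each contribute $O(1)$ work per loop iteration, hence $O(T)$ total. Your charging argument is the same idea spelled out with considerably more care---in particular your handling of the \textbf{continue} iterations and the $\textproc{Simulate\_Path}$ signature ambiguity goes beyond what the paper actually writes down, which is consistent with its stated trade of rigor for accessibility.
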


\paragraph{Final notes on count-first scaling and possible quantum advantages}
The complexity of identifying unhappy agents depends significantly on implementation details. We store unhappy indices as a bitstring, where a vertex is unhappy if and only if its corresponding bit is set to 1. Under the assumption that each unhappy count appears $O(1)$ times during the algorithm's execution, identifying unhappy bits requires $O(|V| \log |V|)$ time per run. Our empirical scaling matches this bound, suggesting that agent identification constitutes the primary computational bottleneck.

This analysis leaves room for potential quantum speedup via quantum search. A quantum algorithm for unsorted search over $m$ unhappy agents requires time at least $\Omega\left(\sqrt{|V|/m}\right)$. However, the overall running time remains $\Omega(|V|)$, yielding a maximum improvement of only $O(\log |V|)$—a polylogarithmic factor typically considered below the threshold of practical interest and often omitted from formal complexity analysis.

\subsection{Experimental Validation of Count-First vs. Traditional ABM Runtime}\label{4:experiments}

To validate our theoretical complexity analysis, we empirically measured the runtime scaling of both the traditional ABM and count-first implementations on lollipop networks of varying sizes.

\subsubsection{Experimental Setup}

We implemented the traditional ABM in the Mesa agent-based modeling framework~\cite{python-mesa-2020}, optimized to support runs of up to 10,000 agents on arbitrary network topologies. The count-first algorithm was implemented in C++ following the pseudocode presented in Section~\ref{4:algorithm}.

The comparison between traditional and count-first implementations used the following parameters: 80\% agent density, 50\% threshold similarity, and a 50\%-50\% split of agent types. The lollipop network was configured with 10\% of nodes in the clique and 90\% in the path. For each network size, we ran 500 trials and averaged the total runtime to reach global satisfaction. An overview of the wall clock time required for each approach is shown in Figure \ref{fig:lpop_time}.

\begin{figure}[!ht]
\centering
\includegraphics[width=0.95\columnwidth]{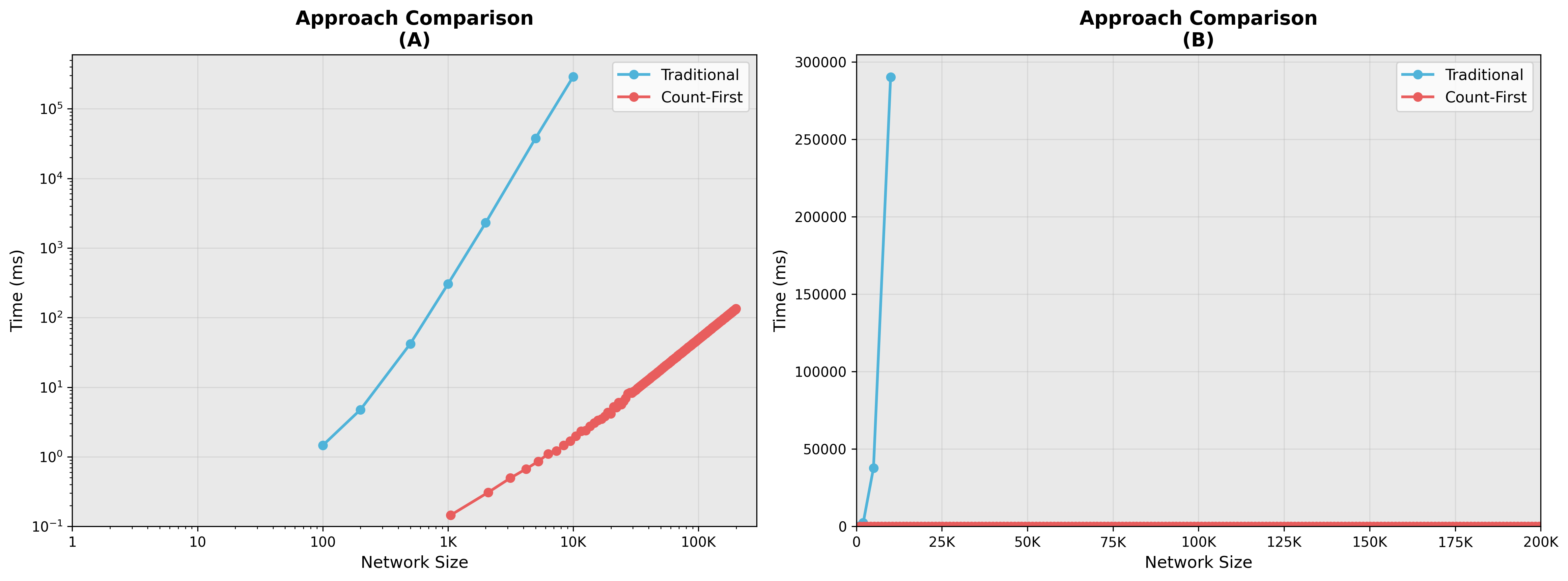}
\caption{(A) Log-log plot of the individual data points of the runtime it takes to compute the number of moves for agents in the Schelling model to reach global satisfaction. (B) The same plot with the x and y axes presented on a linear scale. The approach shown in red is our count-first algorithm. The approach shown in blue is the traditional ABM implementation.}\label{fig:lpop_time}
\end{figure}

\subsubsection{Regression Analysis}

To determine the empirical scaling exponents for each approach, we tested six standard regression models against the experimental data (Tables~\ref{tab:reg_fit} and~\ref{tab:cf_fit}). For the traditional ABM, the power regression model provided the best fit.

\begin{table}[ht]
\centering
\caption{Regression Model Comparisons for Traditional Agent-Based Model}
\label{tab:reg_fit}
\begin{tabular}{lccc}
\toprule
\textbf{Model} & \textbf{Model Equation} & \textbf{RMSE} & \textbf{R\textsuperscript{2}}\\
\midrule
Power & 4.29e-10·n\textsuperscript{3.01}  & 0.037 & 1.000 \\
Quadratic & 4.47e-06·n\textsuperscript{2} &  21.98 & 0.981 \\
Linearithmic & 4.25e-03·n·log(n)  &  59.65 & 0.859 \\
Linear & 0.038·n  &  66.83 & 0.823 \\
Logarithmic & 13.60·log(n) &  147.0 & 0.146 \\
Constant & 74.71 &  159.0 & 0.000 \\
\bottomrule
\end{tabular}
\end{table}

The power regression achieves \(R^2 \approx 1.000\), indicating an excellent fit to $O(n^{3.01})$, which matches the expected scaling described in Section~\ref{4:complexity}. To verify this scaling holds across parameter variations, we tested different agent densities and similarity thresholds (Figure~\ref{fig:power_reg}). The power law exponent remained consistent (ranging from 2.94 to 3.08), confirming that cubic scaling is robust to parameter changes.

\begin{figure}[!ht]
\centering
\includegraphics[width=1.0\columnwidth]{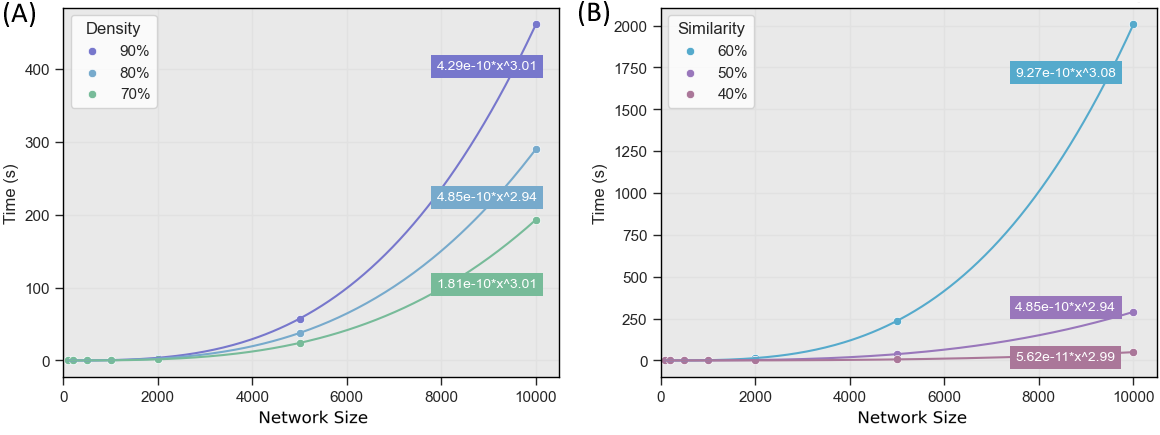}
\caption{Power regression approximations for traditional ABM under parameter variations. (A) Effect of agent density on scaling. (B) Effect of similarity threshold on scaling. The exponent remains consistent across variations, confirming robust cubic scaling behavior.}\label{fig:power_reg}
\end{figure}

For the count-first algorithm, we applied the same fitting procedure (Table~\ref{tab:cf_fit}):

\begin{table}[ht]
\centering
\caption{Regression Model Comparisons for Count-First Algorithm}
\label{tab:cf_fit}
\begin{tabular}{lccc}
\toprule
\textbf{Model} & \textbf{Model Equation} & \textbf{RMSE} & \textbf{R\textsuperscript{2}}\\
\midrule
Power & 5.18e-04·n\textsuperscript{1.05}  & 19.64 & 0.996 \\
Quadratic & 1.25e-09·n\textsuperscript{2} &  139.7 & 0.784 \\
Linearithmic & 7.39e-05·n·log(n)  &  20.06 & 0.996 \\
Linear & 1.03e-03·n  &  21.77 & 0.995 \\
Logarithmic & 39.67·log(n) &  267.9 & 0.207 \\
Constant & 491.28 &  300.8 & 0.000 \\
\bottomrule
\end{tabular}
\end{table}

The count-first algorithm's runtime achieves linear or  near-linear scaling. The RMSE and textbf{R\textsuperscript{2}} show very similar results for power (5.18e-04·n\textsuperscript{1.05}), linearithmic (7.39e-05·n·log(n)), and  linear (1.03e-03·n) scaling. This represents a dramatic reduction compared to the traditional ABM's cubic scaling.

\subsubsection{Scaling Exponent Analysis}
Recall from the previous subsection that both approaches showed the best fit with a power model. Here we quantify the uncertainty in scaling exponents for that model for each approach by three distinct fitting methods to both datasets. This allows us to understand the range of possibilities for the scaling exponent under different assumptions (see Table~\ref{tab:scaling_exponents}).

\begin{table}[ht]
\centering
\caption{Scaling Exponents for Count-First and Traditional ABM Implementations Across Various Fitting Methods}
\label{tab:scaling_exponents}
\begin{tabular}{lcc}
\toprule
\textbf{Fitting Method} & \textbf{Count-First Exponent} & \textbf{Traditional ABM Exponent} \\
\midrule
Polyfit (log-log regression) & 1.347 & 1.914 \\
Nonlinear LS & 1.05 & 2.813 \\
Local Exponent & 1.09 & 2.178 \\
\bottomrule
\end{tabular}
\end{table}

The fitting methods—(1) log-log linear regression, (2) nonlinear least squares, and (3) local scaling exponent via finite differences—provide complementary perspectives on scaling behavior. Log-log regression (Method 1) is robust when the underlying relationship is linear in log-log space. Nonlinear least squares (Method 2) fits in the original scale, allowing larger runtimes to dominate. Local exponent analysis (Method 3) detects point-wise scaling, revealing non-uniformities \cite{pushak2020advanced}.

Across all methods, the count-first algorithm exhibits subquadratic scaling with exponents ranging from 1.05 to 1.347. In contrast, the traditional ABM implementation shows near-cubic scaling with exponents from 1.914 to 2.813. This two-to-three-fold difference in scaling penalties is substantial and compounds with network size.

These empirical results validate the theoretical speedup predicted by our complexity analysis. The traditional ABM could not complete sufficient number of runs beyond 10,000 agents within practical time and memory constraints. Meanwhile, the count-first algorithm continued producing results in .2 seconds (200ms) for 200,000 agents. Even at 1,000,000 agents the algorithm still computes the results in 1 second. This disparity underscores how the count-first algorithm's reduction in complexity scales to problem sizes that have previously been out of scope for the agent-based modeling community.

The counts-first algorithm exhibits slight deviations from perfect linearity, which we attribute to inefficiencies in the data structures used for agent position storage. Standard data structures introduce negligible overhead for moderately sized agent populations; however, at extreme scales, this overhead may become more pronounced, potentially yielding sublinear behavior. Whether such deviations can be mitigated through alternative algorithmic approaches remains an open question that we leave for future work.

\section{Discussion}
Recent efforts to bring quantum computing techniques into agent-based modeling have been intellectually stimulating, but have yet to resolve the core semantic mismatch between classical agent-based models and quantum primitives. QUBO formulations for network optimization, and variational quantum circuits have all attracted attention for their potential to simulate complex social dynamics \cite{qubo_multiagent_2025,quantum_annealing_traffic_2024,quantum_walk_social_2021,quantum_walk_community_2020}. Yet, none of these approaches fully account for the methodological gap: traditional ABM is implemented by iteratively updating agent states based on localized interactions, whereas quantum algorithms depend on coherent, global amplitude evolution, which is fundamentally disrupted by any iterative, stepwise observation~\cite{ozhigov1998quantum,childs_quantum_forgetting_2023}.

At first glance, this seems to argue for abandoning agent-based modeling in favor of more ``quantum-friendly'' approaches. However, this is not the case, and to do would be misguided. Agent-centric models analyze the particularities of individual preferences and local dynamics. Our count-first algorithm is an agent-centric model. It computes, for a given network and agent configuration, whether a globally satisfied solution is reachable, and if so, the precise number of moves required. While iteratively computing the state of a model is not aligned with quantum solutions, agent-centric modeling still can be. 

The fundamental challenge lies in how quantum advantage actually arises. When quantum algorithms do succeed, they succeed because they exploit hidden structure. This structure reflects patterns or symmetries in a problem that remain invisible to classical algorithms. However, once a researcher can identify and describe these hidden structures explicitly, classical algorithms can often exploit it too \cite{PhysRevLett.125.170504, gilyen_10.1145/3406325.3451060}. Consider the lollipop network we study here. Its structure, a dense clique connected to a sparse path, is not hidden. It is precisely defined. Our count-first algorithm exploits this explicit structure to achieve efficiency. This means any quantum approach to the same problem faces the same structural insight, eliminating the very advantage quantum methods might otherwise claim. A key takeaway is that: for a quantum algorithm to achieve advantage here, it must exploit aspects of the problem that are either not classically available or not classically exploitable.

This connects to a deeper architectural point about quantum versus classical information processing. Classical systems, including traditional ABM simulations, track probability. At each step, how likely is the system to occupy each possible state? Quantum systems, by contrast, evolve amplitudes. Amplitudes are mathematical quantities that can interfere constructively and destructively, enabling, for instance, quantum walks to sidestep barriers that trap classical probability-based walks. Yet achieving this requires the quantum system to remain isolated from measurement. The moment a researcher asks "what state is the system in right now?" the quantum amplitudes collapse to classical probabilities and the game ends. Our count-first algorithm sidesteps this problem by abandoning simulating the intermediate states of the system. Instead, it computes directly from structural properties whether satisfaction is achievable, and if so, the exact count of moves required. This is structural analysis. Although a quantum algorithm might inherently exploit this structure without any analysis, to demonstrate the quantum advantage, we must show that a classical algorithm cannot exploit the very same structure under the very same access constraints.

Methodologically, these findings suggest a need for recalibrating how researchers approach agent-based modeling in the era of quantum computation. The process of recalibrating might yield substantially more powerful ABMs even without quantum computers. Attempting to force agent-based modeling problem-solving into quantum frameworks through stepwise state observation is not the answer for increased efficiency. Instead, we need to recognize when classical or quantum algorithms demand a structural rethinking of the research question. Our count-first algorithm is fundamentally bottom-up. However, it achieves computational tractability by leveraging analytically tractable features of the system, moving beyond naive and direct simulation. The value of ABM lies precisely in its granularity and flexibility. The lesson from our research is that efficiency sometimes requires reframing the problem at the structural level. Doing so does not always lose flexibility; it sometimes allows us to answer previously intractable questions.

\section{Conclusion}
This work challenges the prevailing approach to quantum agent-based modeling. Rather than forcing agent-based modeling problems into quantum frameworks, we must ask fundamentally different questions that align with quantum computational capabilities. Our journey illustrates this principle through concrete missteps and successes. Our initial reduction of Schelling's model to QUBO formulation exemplifies the wrong direction. The direct translation obscured what makes agent-based systems interesting while failing to leverage quantum advantage. In contrast, our count-first algorithm demonstrates the right approach. It is a structural reconceptualization that exploits the explicit problem architecture to compute minimum moves to global satisfaction on lollipop networks, establishing a concrete lower bound a quantum algorithm must breach for advantage. Any quantum method must overcome this baseline, plus the substantial overhead of merely running the problem on a quantum device. Given the current and near-term state of quantum hardware, this is unlikely. 

However, this conclusion should not discourage quantum approaches to agent-based systems. Rather, it clarifies what quantum advantage requires. Quantum advantage requires problems whose computational structure aligns naturally with quantum primitives: the ability to exploit interference, superposition, and entanglement. In general, it cannot be achieved by reducing traditional ABM implementations to well-known quantum problems. By embracing structural reconceptualization as a methodology, researchers can identify ABM questions where quantum approaches might genuinely outperform classical solutions. At the same time, researchers adopting this approach will advance classical understanding through structural insights, and even if quantum computers never come to fruition, their efforts will still have advanced science.

\section*{Author Contributions}

The following author contributions are categorized in \Cref{tab:contributions} according to the \textit{CRediT (Contributor Roles Taxonomy)} \cite{credit2022contributor}. The author order in this paper is strictly alphabetical and does not imply relative levels of contribution.
\begin{table*}[!ht]
\centering
\renewcommand{\arraystretch}{1.2}
\resizebox{\textwidth}{!}{%
\begin{tabular}{lccccccccccccc}
\toprule
\textbf{Author} &
\rotatebox[origin=c]{60}{\parbox{3.2cm}{\centering Conceptualization}} &
\rotatebox[origin=c]{60}{\parbox{3.2cm}{\centering Formal Analysis}} &
\rotatebox[origin=c]{60}{\parbox{3.2cm}{\centering Investigation}} &
\rotatebox[origin=c]{60}{\parbox{3.2cm}{\centering Methodology}} &
\rotatebox[origin=c]{60}{\parbox{3.2cm}{\centering Resources}} &
\rotatebox[origin=c]{60}{\parbox{3.2cm}{\centering Software}} &
\rotatebox[origin=c]{60}{\parbox{3.2cm}{\centering Visualization}} &
\rotatebox[origin=c]{60}{\parbox{3.2cm}{\centering Writing – Original Draft}} &
\rotatebox[origin=c]{60}{\parbox{3.2cm}{\centering Writing – Review \& Editing}} &
\rotatebox[origin=c]{60}{\parbox{3.2cm}{\centering Funding Acquisition}} &
\rotatebox[origin=c]{60}{\parbox{3.2cm}{\centering Project Administration}} &
\rotatebox[origin=c]{60}{\parbox{3.2cm}{\centering Supervision}} &
\rotatebox[origin=c]{60}{\parbox{3.2cm}{\centering Validation}} \\
\midrule
C. Nico Barati        & X &   &   &   &   &   &   &   &   &   &   &   &   \\
Arie Croitoru         & X &   &   &   &   &   &   &   &   & X & X & X &   \\
Ross Gore             & X & X & X & X & X &   &   & X &   &   &   &   &   \\
Michael Jarret        & X & X & X & X & X & X &   & X &  &   &   &   &   \\
William Kennedy       & X &   & X &   &   &   &   &   &   &   &   & X & X \\
Andrew Maciejunes     &   &   & X & X &   & X &   &   & X &   &   &   &   \\
Maxim Malikov      & X & X & X & X & X & X & X &   & X &   &   &   &   \\
Samuel Mendelson      & X &   &   &   &   &   &   &   &   &   &   &   &   \\
\bottomrule
\end{tabular}%
}
\caption{Author contributions per CRediT taxonomy (X indicates contribution). \label{tab:contributions}}
\end{table*}

\bibliographystyle{unsrt}  
\bibliography{references}  




%

\end{document}